\documentclass[lettersize,journal]{IEEEtran}
\usepackage{cite}
\usepackage{amsmath,amssymb,amsfonts}
\usepackage{algorithmic}
\usepackage{algorithm}
\usepackage{array}
\usepackage[caption=false,font=normalsize,labelfont=sf,textfont=sf]{subfig}
\usepackage{textcomp}
\usepackage{stfloats}  
\usepackage{graphicx}
\usepackage{bm}
\usepackage{cases}
\usepackage{xcolor}
\usepackage{enumerate} 
\usepackage{lipsum} 
\usepackage{url}
\newtheorem{theorem}{\bf{\textit{Theorem}}}[section]
\newtheorem{lemma}{\bf{\textit{Lemma}}}[section]
\usepackage{verbatim}
\usepackage{makecell}
\usepackage{graphicx}
\usepackage{cite}
\usepackage{hyperref}
\begin{document}

\title{Low-Cost Physical-Layer Security Design for IRS-Assisted mMIMO Systems with One-Bit DACs}

\author{Weijie Xiong, Jian Yang, Jingran Lin, Hongli Liu, Zhiling Xiao, and Qiang Li
\thanks{This work was supported by the National Natural Science Foundation of China under Grant 62171110. \textit{(Corresponding author: Jingran Lin)}.}
\thanks{Weijie Xiong, Hongli Liu and Zhiling Xiao are with the School of Information and Communication Engineering, University of Electronic Science and Technology of China, Chengdu 611731, China (e-mail: 202311012313@std.uestc.edu.cn; hongliliu@std.uestc.edu.cn; zhilingxiao9928@163.com).}
\thanks{Jian Yang is with School of Cyberspace Science and Technology, Beijing Institute of Technology, Beijing 100081, China, and also with the Laboratory
of Electromagnetic Space Cognition and Intelligent Control, Beijing 100083, China (e-mail: jian\_yang\_jy@163.com).}
\thanks{Jingran Lin and Qiang Li are with the School of Information and Communication Engineering, University of Electronic Science and Technology of China, Chengdu 611731, China, and also with the Tianfu Jiangxi Laboratory, Chengdu, Sichuan 641419, China (e-mail: jingranlin@uestc.edu.cn; lq@uestc.edu.cn).}
}

\markboth{Journal of \LaTeX\ Class Files,~Vol.~14, No.~8, August~2021}%
{Shell \MakeLowercase{\textit{et al.}}: A Sample Article Using IEEEtran.cls for IEEE Journals}


\maketitle

\begin{abstract}
Integrating massive multiple-input multiple-output (mMIMO) systems with intelligent reflecting surfaces (IRS) presents a promising paradigm for enhancing physical-layer security (PLS) in wireless communications. However, deploying high-resolution quantizers in large-scale mMIMO arrays, along with numerous IRS elements, leads to substantial hardware complexity. To address these challenges, this paper proposes a cost-effective PLS design for IRS-assisted mMIMO systems by employing one-bit digital-to-analog converters (DACs). The focus is on jointly optimizing one-bit quantized precoding at the transmitter and constant-modulus phase shifts at the IRS to maximize the secrecy rate. This leads to a highly non-convex fractional secrecy rate maximization (SRM) problem. To efficiently solve this problem, two algorithms are proposed: (1) the WMMSE-PDD algorithm, which reformulates the SRM problem into a sequence of non-fractional programs with auxiliary variables using the weighted minimum mean-square error (WMMSE) method and solves them via the penalty dual decomposition (PDD) approach, achieving superior secrecy performance; and (2) the exact penalty product Riemannian gradient descent (EPPRGD) algorithm, which transforms the SRM problem into an unconstrained optimization over a product Riemannian manifold, eliminating auxiliary variables and enabling faster convergence with a slight trade-off in secrecy performance. Both algorithms provide analytical solutions at each iteration and are proven to converge to Karush–Kuhn–Tucker (KKT) points. Simulation results confirm the effectiveness of the proposed methods and highlight their respective advantages.
\end{abstract}

\begin{IEEEkeywords}
 Intelligent reflecting surface, massive MIMO, physical-layer security, one-bit DACs, Riemannian optimization, penalty dual decomposition.
\end{IEEEkeywords}

\section{Introduction}
\IEEEPARstart{T}{he} broadcast nature of wireless transmissions makes them vulnerable to unauthorized interception, leading to eavesdropping attacks \cite{jiang2024physical}. To address this challenge, physical-layer security (PLS) has emerged as a fundamental paradigm, exploiting the statistical spatial characteristics of wireless channels to secure communications. By steering signals toward legitimate users while suppressing eavesdroppers, PLS enables reliable decoding and mitigates information leakage \cite{liu2024survey}. Owing to its ability to enhance security independently of cryptographic techniques, PLS has attracted increasing attention as a key research direction in wireless communications \cite{xiong2025secure}.

Meanwhile, massive multiple-input multiple-output (mMIMO) is regarded as one of the key technologies for future wireless communications. By deploying large-scale antennas at the transmitter, it not only provides substantial spectral gains but also opens new opportunities for enhancing PLS \cite{xiong2025securehb,zhu2014secure,8714023,wu2016secure,kapetanovic2015physical}. For instance, \cite{zhu2014secure} studied multicell mMIMO systems employing maximum ratio transmission (MRT) and null-space artificial noise (AN) to improve secrecy under both perfect and imperfect channel state information (CSI). In \cite{8714023}, matrix polynomial-based precoding was proposed to further strengthen secure transmission. Additionally, \cite{wu2016secure} investigated active eavesdropping and derived corresponding secrecy rates under matched filter precoding. Studies such as \cite{kapetanovic2015physical} developed detection strategies for active eavesdroppers by leveraging the abundant spatial DoFs in mMIMO. Nevertheless, secrecy performance can still degrade when the eavesdropping channel is strongly correlated with the legitimate channel.

Recent advances have identified intelligent reflecting surfaces (IRSs) as a promising solution to address this limitation and enhance PLS \cite{xiong2025enhancing}. Typically, an IRS comprises a metasurface of numerous passive elements capable of digitally adjusting the phase of incident electromagnetic waves, thereby reconfiguring the wireless environment and introducing additional reflection paths \cite{cheng2023ris}. Integrating IRSs with mMIMO mitigates channel correlation between legitimate users and eavesdroppers by reinforcing desired signals through reflected paths \cite{yang2024spatially}. Recent studies have demonstrated the effectiveness of IRS-assisted PLS in mMIMO networks \cite{asaad2022secure,kong2024distributed,xu2021artificial,vu2025enhancing,wang2022intelligent,hua2024unification}. For instance, \cite{asaad2022secure} jointly optimized the precoding at the mMIMO BS and the IRS phase shifts to maximize secrecy rates, achieving a significant improvement in secure performance compared to the scenario without IRS. To further strengthen PLS, \cite{kong2024distributed} and \cite{xu2021artificial} developed AN-assisted transmission strategies against passive eavesdropping. In more practical settings, \cite{vu2025enhancing} and \cite{wang2022intelligent} tackled the challenges posed by non-colluding and colluding eavesdroppers by jointly optimizing the transmit precoders and IRS configurations. Moreover, \cite{hua2024unification} explored the design with imperfect CSI and assessed the effectiveness of IRS in enhancing secrecy when the eavesdropping channel outperforms the legitimate channel.

However, existing research on PLS in IRS-assisted mMIMO systems predominantly relies on high-resolution quantizers, requiring numerous radio frequency (RF) chains and high-performance data converters \cite{ngo2024ultradense}. This significantly increases hardware complexity, particularly in large-scale deployments. Moreover, IRSs typically comprise hundreds of passive reflecting elements fabricated from specialized materials, such as metamaterials or phase-change materials \cite{sarkar2024comprehensive}. Although each element operates with low power, large-scale integration substantially raises overall energy consumption and manufacturing cost. These combined burdens critically limit the scalability and practical deployment of IRS-assisted mMIMO systems.

To address these challenges, a promising solution is to employ precoding architectures based on low-resolution components, such as one-bit digital-to-analog converters (DACs) \cite{cheng2021transmit}. As detailed in Table \ref{tab:power}, this architecture greatly reduces power consumption and enhances energy efficiency. Specifically, replacing 12-bit high-resolution DACs with one-bit variants reduces DAC power consumption by over 90\% (from $\approx 200$ mW to $\approx 5$ mW) \cite{lee2008analog}. Furthermore, the relaxed linearity requirements enable the deployment of Class-E power amplifiers (PAs) with $\sim 70\%$ efficiency, significantly outperforming the $\approx 30\%$ efficiency of conventional Class-AB PAs \cite{saxena2017analysis}. Recent studies have explored the integration of one-bit DACs into IRS-assisted mMIMO systems to improve energy efficiency and transmission performance \cite{wei2021waveform,arfaoui2023analysis,wu2019intelligent,chen2024joint,ci2025hybrid,li2024reconfigurable}. For instance, \cite{wei2021waveform} jointly optimized one-bit DAC precoding and IRS phase shifts to maximize system throughput under quantization constraints. In \cite{arfaoui2023analysis}, a deep learning (DL)-based precoding method was proposed to mitigate distortion caused by coarse quantization in IRS-aided networks. Furthermore, \cite{wu2019intelligent} and \cite{chen2024joint} investigated robust precoding algorithms to enhance spectral efficiency in one-bit DAC systems. Studies such as \cite{ci2025hybrid} focused on hardware-efficient joint optimization of hybrid precoding and IRS configurations, while \cite{li2024reconfigurable} extended the framework to multi-IRS architectures to further reduce pilot overhead and complexity.

\begin{table}[!t]
\centering
\caption{{Estimated Hardware Power Consumption per RF Chain}}
\label{tab:power}
\renewcommand{\arraystretch}{1.2}
\begin{tabular}{|l|c|c|c|}
\hline
\textbf{Component} & \textbf{12-bit} & \textbf{one-bit} & \textbf{Reduction} \\ \hline
DAC Power & $\approx 200$ mW & $\sim 5$ mW & $\approx 97.5\%$ \\ \hline
PA Efficiency & \makecell{Class-AB \\ ($\approx 30\%$)} & \makecell{Class-E \\ ($\approx 70\%$)} & High Efficiency \\ \hline
\textbf{Total RF Cost} & \textbf{High} & \textbf{Low} & \textbf{Significant} \\ \hline
\end{tabular}
\end{table}

Although one-bit DACs have been widely adopted in IRS-assisted mMIMO systems to reduce hardware costs, their effectiveness in achieving satisfactory low-cost PLS remains largely unexplored. Motivated by the hardware efficiency of one-bit DACs, this paper investigates their integration to enhance PLS performance in IRS-assisted mMIMO networks. Specifically, we consider a scenario in which an mMIMO transmitter communicates with a multi-antenna legitimate receiver in the presence of a multi-antenna eavesdropper. The objective is to jointly optimize one-bit quantized precoding at the transmitter and constant-modulus phase shifts at the IRS to maximize the secrecy rate.

However, addressing PLS in one-bit DAC–based IRS-assisted systems is substantially more challenging than conventional energy efficiency maximization \cite{wei2021waveform,arfaoui2023analysis,wu2019intelligent,chen2024joint,ci2025hybrid,li2024reconfigurable}. This difficulty mainly stems from the difference-based secrecy objective, which leads to a highly intractable mixed-integer fractional programming problem. Consequently, relaxation-based methods such as semidefinite relaxation (SDR) \cite{wei2021waveform,chen2024joint,li2024reconfigurable} and direct projection \cite{wu2019intelligent,ci2025hybrid} often suffer severe performance degradation, since the approximation error is significantly amplified by the difference-based secrecy objective. Moreover, existing DL architectures \cite{arfaoui2023analysis} cannot be directly applied, as the fractional nature of the secrecy objective induces adversarial gradients, rendering standard training algorithms highly unstable.

To address these challenges, we propose an optimization framework that fundamentally resolves the conflict between discrete constraints and fractional objectives. We first reformulate the one-bit constraints into continuous equality constraints via an exact transformation, introducing zero approximation error. Based on this formulation, we develop the WMMSE-PDD and EPPRGD algorithms using penalty dual decomposition and Riemannian manifold optimization, respectively. Crucially, both algorithms guarantee theoretical convergence to a KKT point of the original problem, thereby achieving superior secrecy performance compared to traditional relaxation-based methods. The main contributions are summarized as follows:
\begin{itemize}
\item Unlike previous studies on PLS in IRS-assisted mMIMO systems that rely on high-resolution quantizers at BS, this work focuses on a transmitter architecture employing low-resolution one-bit DACs. By significantly reducing the number of RF chains and the requirement for high-performance data converters, this approach lowers both energy consumption and hardware costs.
\item To tackle the non-convex fractional secrecy rate maximization (SRM) problem, we first propose a performance-focused algorithm that integrates the weighted minimum mean-square error (WMMSE) method with penalty dual decomposition (PDD). The WMMSE method reformulates the fractional objective into a sequence of tractable non-fractional problems with auxiliary variables, which are iteratively solved via PDD with guaranteed convergence to a Karush–Kuhn–Tucker (KKT) point.
\item We further propose a time-efficient exact penalty product Riemannian gradient descent (EPPRGD) algorithm that accelerates convergence without introducing auxiliary variables, compared to the WMMSE-PDD algorithm, at the cost of only a slight performance loss. Specifically, the exact penalty (EP) method is employed to reformulate the original problem into an unconstrained optimization over a product Riemannian manifold, which is then solved using the product Riemannian gradient descent (PRGD) algorithm, also guaranteeing convergence to a KKT point.
\end{itemize}

The structure of the paper is organized as follows. Section II introduces the system model and problem formulation. Section III presents the proposed performance-centered WMMSE-PDD algorithm, while Section IV further proposes the EPPRGD algorithm with faster convergence. Section VI presents the simulation results and Section VII concludes the paper.

The following notations are used throughout the paper. Vectors and matrices are denoted by $\mathbf{a}$ and $\mathbf{A}$, respectively. The operators $(\cdot)^T$, $(\cdot)^H$, and $(\cdot)^*$ denote the transpose, conjugate transpose, and conjugate of a matrix or vector, respectively. $\mathbf{I}$ denotes the identity matrix, and $\mathbb{C}^N$ represents the space of $N$-dimensional complex vectors. The complex Gaussian distribution with mean $\mu$ and variance $\sigma^2$ is denoted by $\mathcal{CN}(\mu, \sigma^2)$. The operators $\text{Tr}(\cdot)$, $\|\cdot\|_F$, $\|\cdot\|_2$, and $|\cdot|$ denote the trace, Frobenius norm, Euclidean norm, and absolute value, respectively. The phase, real part, and imaginary part of $\mathbf{A}$ are denoted by $\text{arg}(\mathbf{A})$, $\Re(\mathbf{A})$, and $\Im(\mathbf{A})$, respectively. The Hadamard division and product are denoted by $\mathbf{A} \oslash \mathbf{B}$ and $\mathbf{A} \odot \mathbf{B}$, respectively. The direct sum operation satisfies $\mathbf{A} \oplus \mathbf{B} = \text{blkdiag}(\mathbf{A}, \mathbf{B})$, where $\text{blkdiag}(\mathbf{A}, \mathbf{B})$ denotes a block diagonal matrix with $\mathbf{A}$ and $\mathbf{B}$ placed along the diagonal and zeros elsewhere.

\section{System Model and Problem Formulation}

Consider an mMIMO downlink network assisted by an IRS, where the IRS is deployed to assist legitimate communication between the transmitter (Alice) and the legitimate user (Bob) in the presence of an eavesdropper (Eve), which attempts to intercept information from the communication. The number of antennas at Alice, Bob, and Eve are \(M\), \(N_b\), and \(N_e\), respectively, while the IRS consists of \(N_i\) reflecting coefficients. 

In order to reduce hardware cost, Alice is equipped with one-bit DACs. Specifically, let $\mathbf{s}\in\mathbb{C}^{M}$ denote the transmit signal. Each entry of \(\mathbf{s}\) is passed through two one-bit quantizers that independently quantize its real and imaginary parts. The resulting quantized transmit vector is then given by,  
\begin{equation}
\mathbf{x} = P\mathcal{Q}_C(\mathbf{s}) 
\in\mathcal{X}^{M},
\label{transmitsignal}
\end{equation}  
where $P$ is the total transmit power; \(\mathcal{Q}_C(\cdot)=\mathcal{Q}(\cdot)+j\mathcal{Q}(\cdot)\) represents the complex-valued element-wise quantization function, which consists of two one-bit real-valued quantizer $\mathcal{Q}(\cdot)$; $\mathcal{X}$ denote the transmit alphabet, given by,
\begin{equation}
\mathcal{X}=\left\{\pm \sqrt{\frac{1}{2M}}\pm j\sqrt{\frac{1}{2M}}\right\},
\end{equation} 
which corresponds to the QPSK signal. It should be noted that the transmit one-bit signal has constant-modulus, and hence, low-cost and linear power amplifiers can be adopted at the RF chains \cite{cheng2021transmit}.

Denote \( { \bf \hat H}_{ab} \in \mathbb{C}^{N_b \times M} \), \( { \bf \hat H}_{ae} \in \mathbb{C}^{N_e \times M} \), \( \mathbf{H}_{ai} \in \mathbb{C}^{N_i \times M} \), \( { \bf \hat H}_{ib} \in \mathbb{C}^{N_b \times N_i} \), and \( { \bf \hat H}_{ie} \in \mathbb{C}^{N_e \times N_i} \) as the channels between Alice-Bob, Alice-Eve, Alice-IRS, IRS-Bob, and IRS-Eve, respectively\footnote{In this paper, we assume that CSI of Eve is available to Alice. This can be achieved through various methods, such as the CSI feedback method or even the local oscillator power leakage from Eve's RF frontend \cite{mukherjee2012detecting}. Additionally, we assume that the transmitter has perfect CSI of Eve, which is feasible, for instance, if Eve is also a user of the system and the transmitter aims to provide different types of users with distinct services or content. These services or content should be provided exclusively to the target users.}. Let ${\bf{\Theta }} = diag({\bm \theta})$ denote the diagonal phase shift matrix for the IRS where ${\bm \theta} = [\theta_1,\theta_2,...,\theta_{N_i}]^T \in \mathbb{C}^{N_i}$. Assuming quasi-static channels, the received signals at Bob and Eve are expressed as,
\begin{equation}
\begin{aligned}
& \mathbf{y}_b = P( { \bf \hat H}_{ib} {\bf{\Theta}} \mathbf{H}_{ai} + { \bf \hat H}_{ab} ) \mathbf{x}+ {\bf{n}}_b \in \mathbb{C}^{N_b}, \\
& \mathbf{y}_e = P( { \bf \hat H}_{ie} {\bf{\Theta}} \mathbf{H}_{ai} + { \bf \hat H}_{ae}  )\mathbf{x} + {\bf{n}}_e \in \mathbb{C}^{N_e},\label{recievesignal}
\end{aligned}
\end{equation}
where \( {\bf{\Theta}} \in \mathbb{C}^{N_i \times N_i} \) denotes the reflection coefficient matrix for the IRS; \( {\bf{n}}_b \sim \mathcal{CN}(\mathbf{0}, \sigma_b^2\mathbf{I}_b) \) and \( {\bf{n}}_e \sim \mathcal{CN}(\mathbf{0}, \sigma_e^2\mathbf{I}_e) \) represent the vectors of zero-mean
additive white Gaussian noise (AWGN).

Based on (\ref{recievesignal}), the secrecy rate from Alice to Bob is given as \cite{asaad2022secure},
\begin{equation}
{R_{sec}({\bf x},{\bm \theta})}=\left[ {R_b({\bf x},{\bm \theta})}- {R_e({\bf x},{\bm \theta})} \right]^{+}, \label{UsetogCs}
\end{equation}
where \(R_b({\bf x},{\bm \theta}) \) and \( R_e({\bf x},{\bm \theta}) \) represent the achievable rates at Bob and Eve, respectively, and are formulated as \cite{asaad2022secure},
\begin{subequations}
\begin{align}
& {R_b({\bf x},{\bm \theta})}= \log \left({1}+||({\mathbf H}_{ib} {\bf{\Theta }} {\mathbf H}_{ai}+{\mathbf H}_{ab} ) {\bf x}||_2^2\right), \\
&  {R_e({\bf x},{\bm \theta})}= \log \left({1}+||({\mathbf H}_{ie} {\bf{\Theta }} {\mathbf H}_{ai}+{\mathbf H}_{ae} ) {\bf x}||_2^2\right),
\end{align}
\label{trasnmittr}
\end{subequations}
where $ { \bf  H}_{ab} = \sqrt{{P}/{\sigma_b^2}}{ \bf \hat H}_{ae}$, $ { \bf  H}_{ae} = \sqrt{{P}/{\sigma_e^2}}{ \bf \hat H}_{ab}$, $ { \bf  H}_{ib} = \sqrt{{P}/{\sigma_b^2}}{ \bf \hat H}_{ib}$ and $ { \bf  H}_{ie} = \sqrt{{P}/{\sigma_e^2}}{ \bf \hat H}_{ie}$.

In this paper, we aim to maximize the secrecy rate by jointly optimizing the quantized precoding vector \(\mathbf{x}\) at Alice and the phase shifts vector \(\bm{\theta}\) at the IRS, subject to constraints on the one-bit precoding alphabet and the constant-modulus IRS phase shifts. Therefore, we formulate the SRM problem as follows,
\begin{subequations}
\begin{align}
 \max _{{\bf x}, {\bm \theta}} \quad& R_{sec}({\bf x},{\bm \theta}), \label{objorig}\\
 \text { s.t. }\quad &{\bf x}\in\mathcal{X}^{M}, \label{cons1orig}\\
 & |{\theta}_{n}| = 1, \forall n \in N_i. \label{cons2orig}
\end{align}
\label{overallproblem}%
\end{subequations}
Substituting the rate expressions from (\ref{UsetogCs}) and (\ref{trasnmittr}) into (\ref{objorig}), the objective function is explicitly expressed as,
\begin{equation}\label{uselognew}
    {R_{sec} = \max\left\{0, \log\left(\frac{1+\|({\mathbf H}_{ib} {\bf{\Theta }} {\mathbf H}_{ai}+{\mathbf H}_{ab} ) {\bf x}\|_2^2}{1+\|({\mathbf H}_{ie} {\bf{\Theta }} {\mathbf H}_{ai}+{\mathbf H}_{ae} ) {\bf x}||_2^2}\right)\right\}.}
\end{equation}
Observing that the objective function (\ref{uselognew}) yields the maximum between the constant 0 and the logarithmic term $\log(\frac{1+\|({\mathbf H}_{ib} {\bf{\Theta }} {\mathbf H}_{ai}+{\mathbf H}_{ab} ) {\bf x}\|_2^2}{1+\|({\mathbf H}_{ie} {\bf{\Theta }} {\mathbf H}_{ai}+{\mathbf H}_{ae} ) {\bf x}||_2^2})$, and since the logarithmic function is strictly monotonically increasing, the optimization can initially focus on maximizing this logarithmic term. Note that if the maximized logarithmic value is non-positive, the achievable secrecy rate is zero, rendering secure transmission unfeasible. Therefore, the problem can be reformulated as,
\begin{subequations}
\begin{align}
 \max _{{\bf x}, {\bm \theta}} \quad& \log\frac{{1}+||({\mathbf H}_{ib} {\bf{\Theta }} {\mathbf H}_{ai}+{\mathbf H}_{ab} ) {\bf x}||_2^2}{{1}+||({\mathbf H}_{ie} {\bf{\Theta }} {\mathbf H}_{ai}+{\mathbf H}_{ae} ) {\bf x}||_2^2}, \label{objorig2}\\
 \text { s.t. }\quad &{\bf x}\in\mathcal{X}^{M}, \label{cons1orig2}\\
 & |{\theta}_{n}| = 1, \forall n \in N_i. \label{cons2orig2}
\end{align}
\label{overallproblem2}%
\end{subequations}
Problem (\ref{overallproblem2}) constitutes a non-convex mixed-integer fractional programming problem, which is generally known to be NP-hard \cite{8714023}. The intractability primarily arises from four factors: (1) the non-concavity of the fractional objective function (\ref{objorig2}); (2) the coupling between variables ${\bf x}$ and ${\bm \theta}$; (3) the discrete nature of the one-bit constraint (\ref{cons1orig2}); and (4) the non-convex constant-modulus constraint (\ref{cons2orig2}). To address this, the following sections develop practical methods for solving problem (\ref{overallproblem2}).

\section{Proposed WMMSE-PDD Algorithm}
This section addresses the non-convex secrecy rate maximization problem (\ref{overallproblem}) by proposing an algorithm that combines WMMSE‐based reformulation with the PDD method. Firstly, we equivalently rewrite the one-bit quantized constraints (\ref{cons1orig2}) into a continuous smooth form. Subsequently, the fractional objective function (\ref{objorig}) is recast into a more tractable form using the WMMSE method. Finally, the PDD method is applied to solve the transformed problem. The details are provided below.

\subsection{Smoothing of the One-Bit Constraint (\ref{cons1orig2})}
To facilitate the solution of the optimization problem, we rewrite the quantized constraint in (\ref{cons1orig2}) into the equivalent smooth form given by Lemma \ref{lemma31}.

\begin{lemma} \label{lemma31}\itshape
Let \(\mathcal{X} =\bigl\{\pm\sqrt{\tfrac{1}{2M}}\pm j\sqrt{\tfrac{1}{2M}}\bigr\}\), and let \(\mathbf x\in\mathbb C^{M}\). Then, the following statements are equivalent,
\begin{equation}
x_m\in{\mathcal X}, \forall m
\Longleftrightarrow
\begin{cases}
{\mathbf x}^H{\mathbf x} = 1,\\
-\sqrt{\tfrac{1}{2M}}\le\Re\{x_m\}\le\sqrt{\tfrac{1}{2M}},\forall m,\\
-\sqrt{\tfrac{1}{2M}}\le\Im\{x_m\}\le\sqrt{\tfrac{1}{2M}},\forall m.
\end{cases}\label{refor1}
\end{equation}
\end{lemma}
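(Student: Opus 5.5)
We prove the two implications separately, writing $a=\sqrt{1/(2M)}$ and $x_m=u_m+jv_m$ with $u_m=\Re\{x_m\}$ and $v_m=\Im\{x_m\}$. The key observation is that the box constraints bound each squared component by $a^2$. The unit-norm constraint then forces every one of these bounds to be tight, because there are exactly $2M$ squared components and $2M\cdot a^2=1$.

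\emph{Necessity} ($\Rightarrow$). Suppose $x_m\in\mathcal X$ for all $m$. Then $u_m,v_m\in\{\pm a\}$, so both box constraints in (\ref{refor1}) hold trivially. Moreover $|x_m|^2=u_m^2+v_m^2=2a^2=1/M$, and summing over $m$ gives ${\mathbf x}^H{\mathbf x}=\sum_{m=1}^M|x_m|^2=M\cdot\tfrac{1}{M}=1$.

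\emph{Sufficiency} ($\Leftarrow$). Assume the three conditions on the right of (\ref{refor1}). The box constraints give $u_m^2\le a^2$ and $v_m^2\le a^2$ for every $m$, so
\begin{equation*}
{\mathbf x}^H{\mathbf x}=\sum_{m=1}^M\left(u_m^2+v_m^2\right)\le 2M a^2 = 1 .
\end{equation*}
The equality ${\mathbf x}^H{\mathbf x}=1$ therefore makes this a sum of $2M$ nonnegative slacks $(a^2-u_m^2)+(a^2-v_m^2)$ equal to zero. Since each slack is nonnegative, each must vanish, giving $u_m^2=v_m^2=a^2$. Hence $u_m,v_m\in\{\pm a\}$, i.e., $x_m\in\mathcal X$ for all $m$.

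The only step requiring care is this ``sum of nonnegative terms equals zero'' argument in the sufficiency direction. It works precisely because the squared norm equals the product of the number of real components and the common box bound squared. The rest is direct verification.
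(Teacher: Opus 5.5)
Your proof is correct and follows essentially the same route as the paper: necessity is shown by direct computation, and sufficiency by bounding $\mathbf{x}^H\mathbf{x}\le 2Ma^2=1$ and using equality to force every bound to be tight. The only difference is cosmetic: the paper forces tightness in two stages (first $|x_m|^2=1/M$ for each $m$, then each real and imaginary component), whereas you do it in one step over all $2M$ real slacks.
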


{\bf\textit{Proof}}: See Appendix \ref{MPDD1}. $\hfill\blacksquare$

\subsection{Problem Transformation by WMMSE method}
The objective in (\ref{objorig2}) is a ratio of quadratic functions in $\mathbf{x}$ and $\bm{\theta}$, resulting in a challenging fractional programming problem. To ease the difficulty of directly handling the fractional objective, we employ the WMMSE method to reformulate it into an equivalent non-fractional form. Specifically, the WMMSE algorithm converts the SRM into an equivalent, more tractable minimization problem by introducing auxiliary variables. The key properties underlying this equivalence are summarized in Lemma \ref{lemma32}.
\begin{lemma} \label{lemma32} \itshape
(Lemma 4.1 in \cite{shi2015secure}) The following facts hold true.

(1) For any scalar $e\in\mathbb{C}$ with $\Re\{e\}\ge0$, we have,
\begin{equation}
  -\log(e)=\max_{w>0}\log(w)-we+b.
\end{equation}

(2) Define a function,
\begin{equation}
    {\mathbb{E}({\bf v},{\bf x})=(1-{\bf v}^H{\bf H}{\bf x})(1-{\bf v}^H{\bf H}{\bf x})^H+n{\bf v}^H{\bf v},}
\end{equation}
where $n$ is a positive scalar, we have,
\begin{equation}
\begin{aligned}
   \log(1+\|{\bf H}{\bf x}\|_2^2/n)=\max_{w>0,{\bf v}}\log(w)-w\mathbb{E}({\bf v},{\bf x})+b.
\end{aligned}    
\end{equation}

\end{lemma}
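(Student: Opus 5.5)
The plan is to prove part (1) by elementary one-dimensional calculus, and then reduce part (2) to part (1) by first minimizing $\mathbb{E}({\bf v},{\bf x})$ over ${\bf v}$ in closed form. Throughout, I take $b$ to be the constant that makes the identities exact; the computation below shows $b=1$. I also read $e$ as a real positive number. In the application $e$ is an MSE value, which is real and positive, so the condition $\Re\{e\}\ge 0$ reduces to $e>0$. The degenerate case $e=0$ has both sides equal to $+\infty$.

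For (1), fix $e>0$ and consider $f(w)=\log(w)-we+b$ on $w>0$. The function $f$ is strictly concave, since $f''(w)=-1/w^2<0$. Its stationarity condition $1/w-e=0$ gives the unique maximizer $w^\star=1/e$. Substituting yields $f(w^\star)=-\log(e)-1+b$, which equals $-\log(e)$ for $b=1$. Since $f\to-\infty$ both as $w\to 0^+$ and as $w\to\infty$, this stationary point is the global maximum.

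For (2), I would split the joint maximization into an inner and an outer step:
\begin{equation*}
\max_{w>0,{\bf v}}\bigl[\log(w)-w\,\mathbb{E}({\bf v},{\bf x})+b\bigr]=\max_{w>0}\bigl[\log(w)-w\min_{{\bf v}}\mathbb{E}({\bf v},{\bf x})+b\bigr].
\end{equation*}
This is valid because, for each fixed $w>0$, the objective is decreasing in $\mathbb{E}$.

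Next, expand $\mathbb{E}$ as a function of ${\bf v}$:
\begin{equation*}
\mathbb{E}({\bf v},{\bf x})={\bf v}^H({\bf H}{\bf x}{\bf x}^H{\bf H}^H+n{\bf I}){\bf v}-2\Re\{{\bf v}^H{\bf H}{\bf x}\}+1.
\end{equation*}
This is a strictly convex quadratic, because $n>0$ makes ${\bf H}{\bf x}{\bf x}^H{\bf H}^H+n{\bf I}$ positive definite. Its minimizer is ${\bf v}^\star=({\bf H}{\bf x}{\bf x}^H{\bf H}^H+n{\bf I})^{-1}{\bf H}{\bf x}$, with minimum value
\begin{equation*}
\mathbb{E}^\star=1-{\bf x}^H{\bf H}^H({\bf H}{\bf x}{\bf x}^H{\bf H}^H+n{\bf I})^{-1}{\bf H}{\bf x}.
\end{equation*}

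The main step is simplifying $\mathbb{E}^\star$. Write ${\bf a}={\bf H}{\bf x}$. Applying the Sherman--Morrison formula to the rank-one update $n{\bf I}+{\bf a}{\bf a}^H$ gives
\begin{equation*}
{\bf a}^H(n{\bf I}+{\bf a}{\bf a}^H)^{-1}{\bf a}=\frac{\|{\bf a}\|_2^2/n}{1+\|{\bf a}\|_2^2/n}.
\end{equation*}
Hence $\mathbb{E}^\star=1/(1+\|{\bf H}{\bf x}\|_2^2/n)$, which is real and strictly positive.

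Finally, apply part (1) with $e=\mathbb{E}^\star$. The outer maximum equals $-\log(\mathbb{E}^\star)=\log(1+\|{\bf H}{\bf x}\|_2^2/n)$. It is attained at $w^\star=1+\|{\bf H}{\bf x}\|_2^2/n$ and ${\bf v}={\bf v}^\star$. This closed-form pair of optimal auxiliary variables is what the WMMSE updates will use later. The only delicate points are the choice $b=1$ and justifying the interchange of the $w$- and ${\bf v}$-maximizations, which is immediate from monotonicity since $w>0$.
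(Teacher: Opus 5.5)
Your proof is correct and is the scalar specialization of the standard WMMSE argument behind the cited Lemma 4.1: the optimal weight $w^\star=1/e$, the MMSE receiver ${\bf v}^\star$, and the matrix-inversion lemma giving $\mathbb{E}^\star=(1+\|{\bf H}{\bf x}\|_2^2/n)^{-1}$; the paper itself simply imports the result without proof. Reading $e$ as real and positive and setting $b=1$ (in the matrix version this constant is the matrix dimension) are what make the quoted statement meaningful, since for genuinely complex $e$ the maximum of a complex-valued expression is not defined.
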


Next, using Lemma \ref{lemma32}, we derive an equivalent problem of problem (\ref{overallproblem2}) by introducing some auxiliary variables. Define,
\begin{equation}
    \mathbb{E}({\bf v},{\bf x})=(1-{\bf v}^H{\bf H}_b({\bm \theta}){\bf x})(1-{\bf v}^H{\bf H}_b({\bm \theta}){\bf x})^H+{\bf v}^H{\bf v},
\end{equation}
where ${\bf H}_b({\bm \theta}) = {\mathbf H}_{ib} {\bf{\Theta }} {\mathbf H}_{ai}+{\mathbf H}_{ab} $. Then we have from fact (2) that,
\begin{equation}
    \log\left({1}+||{\bf H}_b({\bm \theta}) {\bf x}||_2^2\right)=\max_{w_b>0,{\bf v}} \log(w_b)-w_b\mathbb{E}({\bf v},{\bf x})+b_b,\label{refor2}
\end{equation}
Furthermore, from fact (1), we have,
\begin{equation}
\begin{aligned}
  &-  \log\left({1}+||{\bf H}_e({\bm \theta}) {\bf x}||_2^2\right)\\&=\max_{w_e>0} \log(w_e)-w_e({1}+||{\bf H}_e({\bm \theta}) {\bf x}||_2^2)+b_e,\label{refor3}
\end{aligned}  
\end{equation}
where ${\bf H}_e({\bm \theta}) = {\mathbf H}_{ie} {\bf{\Theta }} {\mathbf H}_{ai}+{\mathbf H}_{ae}$. Based on (\ref{refor1}), (\ref{refor2}), and (\ref{refor3}), after ignoring the constant terms, problem (\ref{overallproblem2}) is equivalent to,
\begin{subequations}
\begin{align}
   \min_{w_b,w_e,{\bf v},{\bf x},{\bm \theta}} &\left\{\begin{array}{l}w_b{\bf x}^H{\bf H}_b^H({\bm \theta}){\bf v}{\bf v}^H{\bf H}_b({\bm \theta}){\bf x}\\-w_b({\bf x}^H{\bf H}_b^H({\bm \theta}){\bf v}+{\bf v}^H{\bf H}_b({\bm \theta}){\bf x})\\+w_b({\bf v}^H{\bf v}+1)-\log(w_b)\\+w_e{\bf x}^H{\bf H}^H_e({\bm \theta}) {\bf H}_e({\bm \theta}){\bf x}+w_e-\log(w_e)
 \end{array}\right\},\label{wwse1}\\
 \text { s.t. }\quad  & -\sqrt{\tfrac{1}{2M}}\le\Re\{x_m\},\Im\{x_m\}\le\sqrt{\tfrac{1}{2M}},\forall m,\label{wwse3}\\
  & {\mathbf x}^H{\mathbf x} = 1,\label{wwse2}\\
 & |{\theta}_{n}| = 1, \forall n. \label{wwse4} 
\end{align}
\label{newforMMSE1}%
\end{subequations}
Although the reformulations improve tractability, the problem remains challenging due to the non-convex, highly coupled objective function (\ref{wwse1}), the coupled constraints (\ref{wwse2}) and (\ref{wwse3}), and the non-convex constraints (\ref{wwse4}). We observe that when either variable is fixed, the resulting subproblem reduces to a convex objective, which is considerably easier to solve. Furthermore, by introducing auxiliary variables and incorporating them into the objective through penalty terms, the otherwise difficult constraints become more manageable. Motivated by these observations, we exploit the sub-convex structure of the objective and the penalty-based formulation to develop an alternating minimization scheme based on the PDD method.

\subsection{Realization of the PDD-based Algorithm}
The PDD method is a double-loop framework \cite{shi2020penalty}, in which the inner loop utilizes the block coordinate descent (BCD) approach to iteratively solve the augmented Lagrangian problem, while the outer loop updates the dual variables or the penalty parameters. The details are described next.

To reduce the difficulty of handling the coupled non-convex constraints, auxiliary variables \({\mathbf{t}} \in \mathbb{C}^M\) and \({\bm{\phi}} \in \mathbb{C}^{N_i}\) are introduced to reformulate the optimization problem in (\ref{newforMMSE1}) as follows,
\begin{subequations}
\begin{align}
   &   \min_{\begin{array}{c} w_b,w_e,{\bf v},\\{\bf x},{\bm \theta},{\bf t},{\bm \phi}
\end{array}} \left\{\begin{array}{l}w_b{\bf x}^H{\bf H}_b^H({\bm \theta}){\bf v}{\bf v}^H{\bf H}_b({\bm \theta}){\bf x}\\-w_b({\bf x}^H{\bf H}_b^H({\bm \theta}){\bf v}+{\bf v}^H{\bf H}_b({\bm \theta}){\bf x})\\+w_b({\bf v}^H{\bf v}+1)-\log(w_b)\\+w_e{\bf x}^H{\bf H}^H_e({\bm \theta}) {\bf H}_e({\bm \theta}){\bf x}+w_e-\log(w_e)
 \end{array}\right\},\label{pdd1}\\
 &\quad\quad\text { s.t. }\quad -\sqrt{\tfrac{1}{2M}}\le\Re\{t_m\},\Im\{t_m\}\le\sqrt{\tfrac{1}{2M}},\label{pdd3}\\
  & \quad\quad\quad\quad\quad{\mathbf x}^H{\mathbf x} = 1,\label{pdd2}\\
 &\quad\quad\quad\quad\quad |{\phi}_{n}| = 1, \forall n, \label{pdd4} \\
  &\quad\quad\quad\quad\quad {\bf t} = {\bf x},  \label{pdd5} \\
  &\quad\quad\quad\quad\quad {\bm \phi} = {\bm \theta}.  \label{pdd6}
\end{align}
\label{pdd1}%
\end{subequations}
By penalizing the equality constraints (\ref{pdd5}) and (\ref{pdd6}) into the objective function, the augmented Lagrangian problem of (\ref{pdd1}) is written as,
\begin{equation}
\begin{aligned}
 \min _{w_b,w_e,{\bf v},{\bf x},{\bm \theta},{\bf t},{\bm \phi}} \quad &{\cal L}_{\rho} (w_b,w_e,{\bf v},{\bf x},{\bm \theta},{\bf t},{\bm \phi};{\bm \varphi }, {\bm \psi} ) \\&= f(w_b,w_e,{\bf v},{\bf x},{\bm \theta},{\bf t},{\bm \phi})\\ &+\frac{1}{2\rho}\|{\bf t}-{\bf x}+{\rho}{\bm \varphi }\|_2^2+\frac{1}{2\rho}\|{\bm \phi}-{\bm \theta}+{\rho}{\bm \psi }\|_2^2, \\
 \text {s.t.} \quad\quad\quad&(\ref{pdd3}), (\ref{pdd2}), (\ref{pdd4}) \text{ are satisfied}, 
\end{aligned}
\label{Lagrangian}
\end{equation}
where $ f(w_b,w_e,{\bf v},{\bf x},{\bm \theta},{\bf t},{\bm \phi})$ represents the objective function in (\ref{pdd1}); ${\bm \varphi} \in \mathbb{C}^{M}$ and ${\bm \psi }\in \mathbb{C}^{N_i}$ are the dual variables; $\rho > 0$ is the penalty parameter.  
Note that the variables \(\{  w_b,w_e,{\bf v},{\bf x},{\bm \theta},{\bf t},{\bm \phi} \}\) are fully separable in the constraints of (\ref{Lagrangian}). Specifically, when the other variables are fixed, each resulting subproblem with respect to a single variable becomes relatively simple. These observations motivate the use of the BCD strategy \cite{tseng2001convergence} to iteratively solve (\ref{Lagrangian}), forming the inner loop of the algorithm. Although some subproblems may involve non-convex constraints, we will later show that each admits a unique global optimum due to its specific structure, thereby ensuring that the BCD-based algorithm converges to a KKT point. The overall process of the PDD-based algorithm is shown in Fig. \ref{PDD}. In the following, we elaborate on the details of the inner-loop BCD-based algorithm.

\begin{figure}[t]
  \begin{center}
  \includegraphics[width=2.5in]{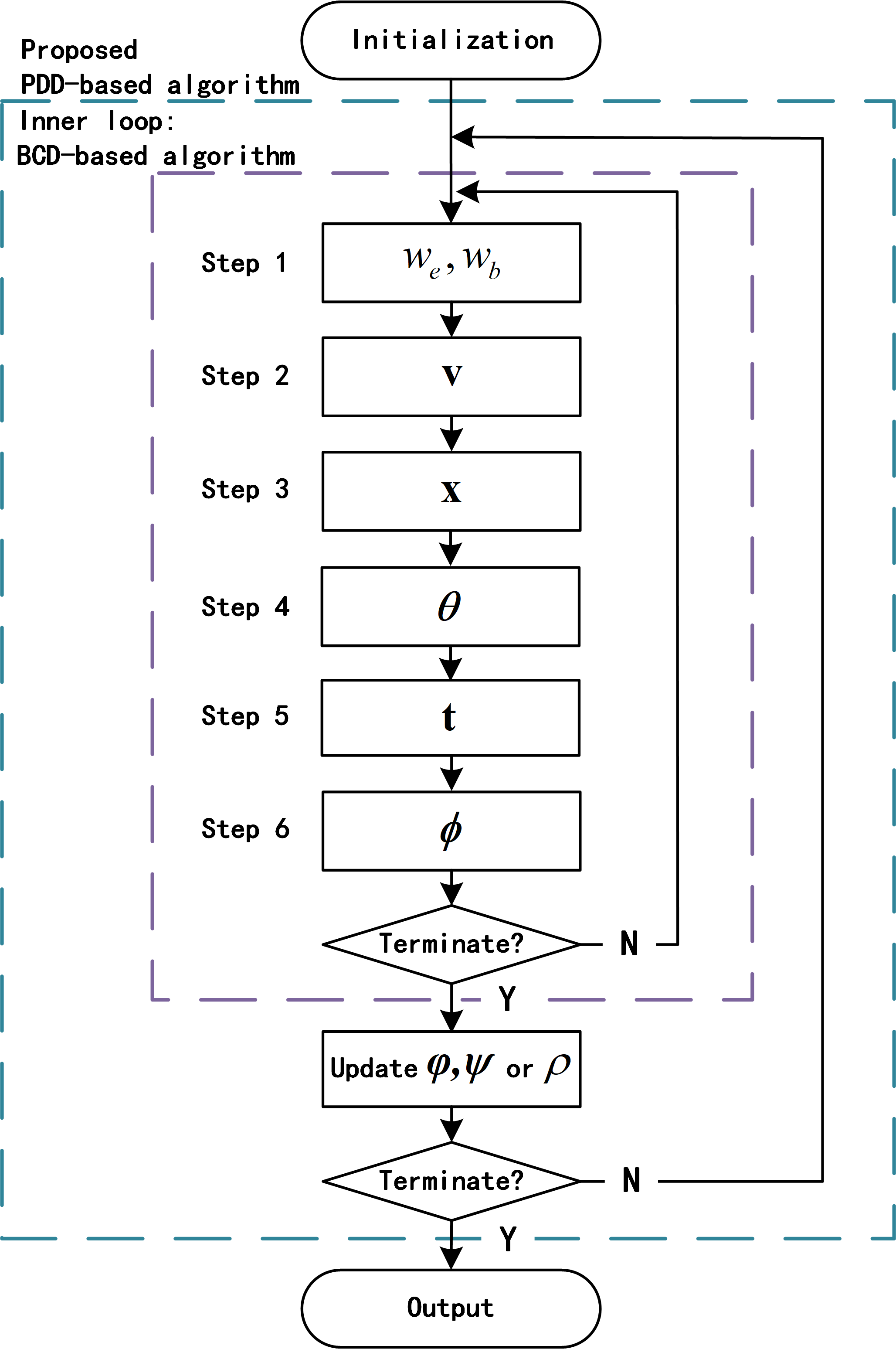}\\
  \caption{ The structure of the proposed PDD-based algorithm.}\label{PDD}
  \end{center}
\end{figure}

\underline{\textit{Inner Layer Procedure}}\\
\subsubsection{Update $w_b$ and $w_e$}
With the other variables being fixed, the objective function of problem (\ref{Lagrangian}) is separable over $w_b$ and $w_e$. Hence, Using fact (1) in Lemma \ref{lemma32} twice, we can easily obtain the optimal $w_b$ and $w_e$ given as,
\begin{equation}
\begin{aligned}
    &w_b = ((1-{\bf v}^H{\bf H}_b({\bm \theta}){\bf x})(1-{\bf v}^H{\bf H}_b({\bm \theta}){\bf x})^H+{\bf v}^H{\bf v})^{-1},\\
    &w_e = (1+\|{\bf H}_e({\bm \theta}){\bf x}\|_2^2)^{-1}.
\end{aligned}   
\label{upwewb}%
\end{equation}

\subsubsection{Update $\bf v$}
The subproblem with respect to \(\mathbf{v}\) is a convex unconstrained optimization problem. By examining the first-order optimality condition \cite{boyd2004convex}, the optimal \(\mathbf{v}\) is given as,
\begin{equation}
    \mathbf{v} = (1+ \|{\bf H}_b({\bm \theta}){\bf x}\|_2^2)^{-1}{\bf H}_b({\bm \theta}){\bf x}. \label{upv}
\end{equation}

\subsubsection{Update $\bf x$}
Given the other variables, the subproblem related to $\bf x$ is given by,
\begin{subequations}
\begin{align}
    \min_{{\bf x}} \quad& {\bf x}^H {\bf A} {\bf x} - {\bf x}^H{\bf b}-{\bf b}^H{\bf x}+\frac{1}{2\rho}\|{\bf x}-({\bf t}+{\rho}{\bm \varphi })\|_2^2,\\
  \text { s.t. }\quad &{\mathbf x}^H{\mathbf x} = 1, \label{upxq}
\end{align}  
\label{upxnow1}%
\end{subequations}
where ${\bf A}=w_b{\bf H}_b^H({\bm \theta}){\bf v}{\bf v}^H{\bf H}_b({\bm \theta})+w_e{\bf H}^H_e({\bm \theta}) {\bf H}_e({\bm \theta})$, ${\bf b}=w_b{\bf H}_b^H({\bm \theta}){\bf v}$. Due to the equality constraint, problem (\ref{upxnow1}) is non-convex. However, by examining the KKT conditions \cite{boyd2004convex}, we can derive a closed-form solution. The first-order optimality condition yields the following expression for ${\bf x}(\lambda)$,
\begin{equation}
   { {\bf x}(\lambda) = (2\rho{\bf A}+(1+2\rho\lambda){\bf I}_M)^{-1}({\bf t}+{\rho}{\bm \varphi }+2\rho{\bf b}),} \label{upx}
\end{equation}
where $\lambda$ is the Lagrange multiplier. To guarantee a global minimizer, the scaled Hessian ${\bf K} = 2\rho{\bf A}+(1+2\rho\lambda){\bf I}_M$ is required to be positive definite \cite{boyd2004convex}, which restricts $\lambda$ to $(-\tfrac{1}{2\rho},\infty)$ given ${\bf A} \succeq 0$. Moreover, define $g(\lambda) \triangleq \|{\bf x}(\lambda)\|_2^2$. By checking the spectral decomposition of ${\bf A}$ with eigenvalues $\{\mu_m\}_{m=1}^M$ \cite{boyd2004convex}, $g(\lambda)$ can be explicitly expressed as,
\begin{equation}
   { g(\lambda) = \sum_{m=1}^M \frac{|[{\bf U}^H({\bf t}+{\rho}{\bm \varphi }+2\rho{\bf b})]_m|^2}{(2\rho\mu_m + 1 + 2\rho\lambda)^2},}
\end{equation}
where ${\bf U}$ is the eigenvector matrix of ${\bf A}$. This formulation confirms that $g(\lambda)$ is strictly monotonically decreasing. Combined with the limits $\lim_{\lambda \to (-1/2\rho)^+} g(\lambda) = \infty$ and $\lim_{\lambda \to \infty} g(\lambda) = 0$, the existence of a unique $\lambda^*$ is guaranteed. Consequently, the bisection method \cite{boyd2004convex} can be employed to efficiently locate the unique $\lambda$. Substituting this $\lambda$ into equation (\ref{upx}) yields the optimal solution ${\bf x}={\bf x}(\lambda)$. Finally, Lemma \ref{lemma33} formally establishes that this solution is the unique global optimum.

\begin{lemma} \label{lemma33} \itshape
The solution given by (\ref{upx}) is the unique global optimum of problem (\ref{upxnow1}).
\end{lemma}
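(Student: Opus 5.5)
The plan is to prove Lemma \ref{lemma33} with the standard argument for trust-region-type subproblems: a Lagrangian lower bound combined with positive definiteness of the shifted Hessian.

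First, simplify the objective of (\ref{upxnow1}). Write ${\bf c}={\bf t}+\rho{\bm\varphi}$ and expand the penalty term. The objective becomes
\begin{equation*}
F({\bf x})={\bf x}^H\bigl({\bf A}+\tfrac{1}{2\rho}{\bf I}_M\bigr){\bf x}-2\Re\{{\bf x}^H({\bf b}+\tfrac{1}{2\rho}{\bf c})\}+\text{const}.
\end{equation*}
For any multiplier $\lambda$, define the Lagrangian $L({\bf x},\lambda)=F({\bf x})+\lambda({\bf x}^H{\bf x}-1)$. After multiplying by $2\rho$, its Hessian is exactly ${\bf K}=2\rho{\bf A}+(1+2\rho\lambda){\bf I}_M$. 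Setting its gradient to zero gives (\ref{upx}).

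Next, fix $\lambda^*$ as the unique root of $g(\lambda)=1$ on $(-\tfrac{1}{2\rho},\infty)$; its existence and uniqueness were already argued above. Let ${\bf x}^*={\bf x}(\lambda^*)$.

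\begin{itemize}
\item Because ${\bf A}\succeq 0$ and $1+2\rho\lambda^*>0$, we have ${\bf K}\succ 0$. Hence $L(\cdot,\lambda^*)$ is strictly convex.
\item ${\bf x}^*$ is therefore its unique unconstrained global minimizer.
\item For any feasible ${\bf x}$, i.e.\ with ${\bf x}^H{\bf x}=1$, the constraint term vanishes, so
\begin{equation*}
F({\bf x})=L({\bf x},\lambda^*)\ge L({\bf x}^*,\lambda^*)=F({\bf x}^*).
\end{equation*}
The last equality uses $\|{\bf x}^*\|_2^2=g(\lambda^*)=1$.
\item Equality holds only if ${\bf x}={\bf x}^*$, by strict convexity. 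This gives both global optimality and uniqueness.
\end{itemize}

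Finally, the main subtlety is that the KKT system may have other stationary points, with $\lambda\le-\tfrac{1}{2\rho}$, where ${\bf K}$ is not positive definite. These cause no difficulty: the lower-bound argument above already shows that any feasible point other than ${\bf x}^*$ has strictly larger objective, so such points cannot be global minimizers.

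One degenerate case needs a check. If ${\bf U}^H({\bf c}+2\rho{\bf b})={\bf 0}$, then $g\equiv 0$ and the root $\lambda^*$ does not exist. In that case the objective restricted to the unit sphere is a Rayleigh quotient, whose minimizer need not be unique. I would therefore state the lemma under the generic assumption ${\bf c}+2\rho{\bf b}\neq{\bf 0}$, which is implicitly assumed when invoking the limit $g\to\infty$. I expect this degenerate case to be the only real obstacle.
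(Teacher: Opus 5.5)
Your argument is correct, and it takes a different route from the paper's. It is also the more complete one.

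\textbf{The paper's route.} The paper argues by contradiction. A second minimizer $\tilde{\mathbf x}\neq\mathbf x$ must satisfy the KKT system with some multiplier $\tilde\lambda$. Hence $\tilde{\mathbf x}=\mathbf x(\tilde\lambda)$ with $g(\tilde\lambda)=1$, and uniqueness of the root of $g(\lambda)=1$ forces $\tilde\lambda=\lambda$, so $\tilde{\mathbf x}=\mathbf x$. This is short, but it silently assumes $\tilde\lambda\in(-\tfrac{1}{2\rho},\infty)$. That assumption is needed twice: to invert $\mathbf K(\tilde\lambda)$ when writing $\tilde{\mathbf x}=\mathbf x(\tilde\lambda)$, and to invoke uniqueness of the root, which was only shown on that interval. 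Read literally, it also only rules out a second optimum rather than proving that $\mathbf x(\lambda)$ is optimal. Completing it would require the second-order necessary condition $\mathbf K(\tilde\lambda)\succeq\mathbf 0$ at global minimizers of sphere-constrained quadratics.

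\textbf{Your route.} Your Lagrangian lower bound $F(\mathbf x)=L(\mathbf x,\lambda^*)\ge L(\mathbf x^*,\lambda^*)=F(\mathbf x^*)$, combined with strict convexity, gives optimality and uniqueness in one step. It needs neither uniqueness of $\lambda^*$ nor a constraint qualification. It also explicitly disposes of stationary points with $\lambda\le-\tfrac{1}{2\rho}$, which is exactly the hole in the paper's version.

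\textbf{A correction to your closing remark.} The condition $\mathbf c+2\rho\mathbf b\neq\mathbf 0$ is not what makes $g\to\infty$ as $\lambda\to(-\tfrac{1}{2\rho})^+$. By the spectral formula, the blow-up requires $\mathbf c+2\rho\mathbf b$ to have a nonzero component along eigenvectors with $\mu_m=0$, i.e., in $\mathrm{null}(\mathbf A)$. Since $\mathbf b=w_b\mathbf H_b^H(\bm\theta)\mathbf v\in\mathrm{range}(\mathbf A)$, this is equivalent to $\mathbf t+\rho\bm\varphi$ having a nonzero component in $\mathrm{null}(\mathbf A)$.

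If that fails, $g$ stays strictly below $\|(2\rho\mathbf A)^{\dagger}(\mathbf c+2\rho\mathbf b)\|_2^2$ on the whole interval. When this bound is below $1$, there is no root there; this is the trust-region ``hard case''. The optimal multiplier is then $-\tfrac{1}{2\rho}$ with $\mathbf K=2\rho\mathbf A$ singular. The minimizers $(2\rho\mathbf A)^{\dagger}(\mathbf c+2\rho\mathbf b)+\mathbf z$, with $\mathbf z\in\mathrm{null}(\mathbf A)$ of the appropriate norm, form a continuum. So uniqueness genuinely fails even though $\mathbf c+2\rho\mathbf b\neq\mathbf 0$.

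The correct hypothesis is the existence of $\lambda^*$ with $\mathbf K(\lambda^*)\succ\mathbf 0$ and $g(\lambda^*)=1$, which is precisely what your proof uses. Your case $\mathbf c+2\rho\mathbf b=\mathbf 0$ is just one instance of the hard case.
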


{\bf\textit{Proof}}: See Appendix \ref{Pro33}. $\hfill\blacksquare$

\subsubsection{Update $\bm \theta$}
Fixing the other variables, the update of $\bm \theta$ reduces to a convex unconstrained problem. Rearranging the relevant terms and omitting constants, it is given by,
\begin{equation}
    \min_{{\bm \theta }} \quad {\bm \theta }^H{\bf C}{\bm \theta }+2\Re\{{\bm \theta}^H{\bf d}^*\}+\frac{1}{2\rho}\|{\bm \phi}-{\bm \theta}+{\rho}{\bm \psi }\|_2^2, \label{uptheta}
\end{equation}
where, 
\begin{equation}
    \begin{aligned}
        &{\bf C}={\bf C}_b\odot{\bf E}^T+{\bf C}_e\odot{\bf E}^T,\quad {\bf E}={\bf H}_{ai}{\bf x}{\bf x}^H{\bf H}_{ai}^H, \\
        &{\bf C}_b=w_b{\bf H}_{ib}^H{\bf v}{\bf v}^H{\bf H}_{ib},\quad{\bf C}_e=w_e{\bf H}_{ie}^H{\bf H}_{ie},\\
        & {\bf d} = \text{diag}\left(\begin{array}{l}w_b{\bf H}_{ai}{\bf x}{\bf x}^H{\bf H}_{ab}^H{\bf v}{\bf v}^H{\bf H}_{ib}\\+w_e{\bf H}_{ai}{\bf x}{\bf x}^H{\bf H}_{ae}^H{\bf H}_{ie}-w_b{\bf H}_{ai}{\bf x}{\bf v}^H{\bf H}_{ib} \end{array}\right).
    \end{aligned}
\end{equation}
For more details of reformulation, readers are encouraged to consult \cite{xiong2025constant}. By checking the first-order optimality condition, the optimal $\bm \theta$ is given by,
\begin{equation}
   {\bm \theta } = (2{\rho}{\bf C}+{\bf I}_{N_i})^{-1}( {\bm \phi}+\rho{\bm \psi }-2{\rho}{\bf d}^* ). \label{resulttheta}
\end{equation}

\subsubsection{Update $\bf t$}
With other variables being fixed, the subproblem related to $\bf t$ is given as,
\begin{subequations}
\begin{align}
    \min_{\bf t} \quad& \frac{1}{2\rho}\|{\bf t}-{\bf x}+{\rho}{\bm \varphi }\|_2^2,\\
    \text { s.t. }\quad &-\sqrt{\tfrac{1}{2M}}\le\Re\{t_m\},\Im\{t_m\}\le\sqrt{\tfrac{1}{2M}}, \label{tcons}
\end{align}  
\label{updt}%
\end{subequations}
which is a convex optimization problem. Ignoring constraint (\ref{tcons}), the unconstrained solution is given by ${\bf t}={\bf x}-{\rho}{\bm \varphi }$. By element-wise projection onto the constraint (\ref{tcons}), the optimal solution to problem (\ref{updt}) is,
\begin{equation}
\begin{aligned}
   & \Re\{{\bf t}_m\} =\min(\max(\Re\{{\bf x}-{\rho}{\bm \varphi }\},-\sqrt{\tfrac{1}{2M}}),\sqrt{\tfrac{1}{2M}}),\\
  & \Im\{{\bf t}_m\} =\min(\max(\Im\{{\bf x}-{\rho}{\bm \varphi }\},-\sqrt{\tfrac{1}{2M}}),\sqrt{\tfrac{1}{2M}}). \label{resultt}
\end{aligned}    
\end{equation}

\subsubsection{Update $\bm \phi$}
The subproblem related to update $\bm \phi$ is given by,
\begin{subequations}
    \begin{align}
       \min_{\bm \phi} \quad & \frac{1}{2\rho}\|{\bm \phi}-{\bm \theta}+{\rho}{\bm \psi }\|_2^2,\label{upobjphi}\\
       \text { s.t. }\quad &|{\phi}_{n}| = 1, \forall n. \label{upphic}
    \end{align}
    \label{pphi}%
\end{subequations}
Due to the constant modulus constraints in (\ref{upphic}), this problem is non-convex. However, since ${\bm \phi}$ has unit modulus entries, the quadratic term with respect to ${\bm \phi}$ in (\ref{upobjphi}) becomes constant. Thus, the problem (\ref{pphi}) reduces to,
\begin{equation}
       \max_{|{ \phi}_n|=1,\forall n} \quad  \Re\{({\bm \theta}-\rho{\bm \psi})^H{ \bm \phi}\}. \label{reforphi}
\end{equation}
The maximum in (\ref{reforphi}) is achieved when each element of ${\bm \phi}$ aligns with the corresponding element of the linear coefficient ${\bm \theta}-\rho{\bm \psi}$. Hence, the optimal solution is explicitly given by, 
\begin{equation}
    {\bm \phi} = \text{exp}(j\cdot \arg({\bm \theta}-\rho{\bm \psi})).\label{nowphi}
\end{equation}
Although problem (\ref{pphi}) is non-convex, we present the Lemma \ref{lemma34} regarding its optimality.

\begin{lemma} \label{lemma34}  \itshape
The solution given by (\ref{nowphi}) is the unique global optimum of problem (\ref{pphi}).
\end{lemma}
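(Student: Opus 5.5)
Writing ${\bf c} \triangleq {\bm \theta}-\rho{\bm \psi}$, we expand the objective of problem (\ref{pphi}) as
\begin{equation*}
\frac{1}{2\rho}\|{\bm \phi}-{\bf c}\|_2^2=\frac{1}{2\rho}\Bigl(\sum_{n=1}^{N_i}|\phi_n|^2+\|{\bf c}\|_2^2-2\Re\{{\bf c}^H{\bm \phi}\}\Bigr).
\end{equation*}
On the feasible set (\ref{upphic}) the first sum equals $N_i$ and $\|{\bf c}\|_2^2$ does not depend on ${\bm \phi}$. Hence (\ref{pphi}) is exactly equivalent to (\ref{reforphi}): the two problems have the same feasible set and the same set of optimizers, and this equivalence is the first step.

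The maximization in (\ref{reforphi}) then decouples across entries, since $\Re\{{\bf c}^H{\bm \phi}\}=\sum_n \Re\{c_n^*\phi_n\}$ and each $\phi_n$ is constrained separately. For each $n$, we write $c_n=|c_n|e^{j\alpha_n}$ and $\phi_n=e^{j\beta_n}$, so that $\Re\{c_n^*\phi_n\}=|c_n|\cos(\beta_n-\alpha_n)\le |c_n|$. This follows from the triangle or Cauchy--Schwarz inequality, $\Re\{c_n^*\phi_n\}\le|c_n||\phi_n|=|c_n|$. When $c_n\neq 0$, equality holds if and only if $\cos(\beta_n-\alpha_n)=1$, that is, $\beta_n=\alpha_n$ modulo $2\pi$. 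So $\phi_n=e^{j\arg(c_n)}$ is the unique maximizer of the $n$th term. Summing the entrywise bounds shows that (\ref{nowphi}) attains the global maximum $\sum_n|c_n|$. Any other feasible ${\bm \phi}$ differs in some entry $n$ and is strictly suboptimal in that term, while no term can exceed its own bound. This gives uniqueness.

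The main obstacle is the degenerate case $c_n=0$ for some $n$. Then the $n$th term vanishes for every unit-modulus $\phi_n$, $\arg(0)$ is undefined, and uniqueness fails. I would handle this by stating the result under the generic condition that every entry of ${\bm \theta}-\rho{\bm \psi}$ is nonzero, which holds almost surely given the continuous-valued iterates. In the degenerate case, uniqueness holds only up to the convention chosen for $\arg(0)$, and (\ref{nowphi}) remains a global optimum. Either way, the global optimality needed for the BCD convergence argument is preserved.
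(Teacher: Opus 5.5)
Your proof is correct and takes essentially the same route as the paper's proof in Appendix~\ref{Pro34}: reduce (\ref{pphi}) to (\ref{reforphi}), decouple across entries, and use the fact that $\Re\{c_n^*\phi_n\}$ on the unit circle is uniquely maximized by $\phi_n=e^{j\arg(c_n)}$. Your version is tighter in two places: the entrywise bound $\Re\{c_n^*\phi_n\}\le|c_n|$ properly justifies the paper's loosely phrased step that equal objective values force termwise equality, and you correctly point out that uniqueness requires every entry of ${\bm \theta}-\rho{\bm \psi}$ to be nonzero, which the paper assumes without saying (one caveat: Appendix~\ref{appendixD} explicitly relies on uniqueness of each block minimizer, so global optimality alone is not quite what the BCD argument uses).
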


{\bf\textit{Proof}}: See Appendix \ref{Pro34}. $\hfill\blacksquare$

\underline{\textit{Outer Layer Procedure}}\\
\indent When the inner-loop BCD-based algorithm converges, we update either the dual variables $\{{\bm \varphi }, {\bm \psi}\}$ or the penalty parameter $\rho$ in the outer layer. Specifically, if the equalities ${\bf t} = {\bf x}$ and ${\bm \phi} = {\bm \theta}$ are approximately satisfied, i.e., both $\|{\bf t}-{\bf x}\|_{\infty}$ and $\|{\bm \phi}-{\bm \theta}\|_{\infty}$ are simultaneously smaller than a predefined diminishing threshold $\eta$ \cite{shi2020penalty}, we update the dual variables $\{{\bm \varphi }, {\bm \psi}\}$ to enhance the robustness of outer-loop convergence,
\begin{equation}
        {\bm \varphi } = {\bm \varphi } + \rho^{-1}({\bf t}-{\bf x}),\quad{\bm \psi} = {\bm \psi} + \rho^{-1}({\bm \phi}-{\bm \theta}).
\end{equation}
Otherwise, if the equality constraints ${\bf t} = {\bf x}$ and ${\bm \phi} = {\bm \theta}$ are not sufficiently satisfied, to enforce convergence in subsequent iterations, i.e., $\rho = c\cdot \rho$, where $c\in(0,1)$ is a predetermined positive constant.

\subsection{Overview of the WMMSE-PDD Algorithm}
Based on the above discussion, the complete WMMSE-PDD algorithm for solving problem (\ref{overallproblem2}) is summarized in Algorithm \ref{alg:1}. This algorithm is simple and efficient, as each step has an analyzed solution. Furthermore, as demonstrated subsequently, it converges to a KKT point.

\begin{algorithm}
	\floatname{algorithm}{Algorithm}
	\renewcommand{\algorithmicrequire}{\textbf{Input:}}
	\renewcommand{\algorithmicensure}{\textbf{Output:}}
	\caption{: WMMSE-PDD algorithm to solve (\ref{overallproblem2}).}
	\label{alg:1}
	\begin{algorithmic}[1]
		\REQUIRE 
                $w_b,w_e,{\bf v},{\bf x},{\bm \theta},{\bf t},{\bm \phi},  \rho,c,\eta, {\eta_{\min}},\varepsilon$, and ${\cal L}_{\rho}^{\text{new}}(\cdot)$.\\
            \STATE {\textbf {Repeat}} \textcolor{magenta}{\texttt{\%~Outer PDD loop}} 
            \STATE\quad {\textbf {Repeat}} \textcolor{magenta}{\texttt{\%~Inner BCD loop -- solve (\ref{Lagrangian})}} 
            \STATE\quad\quad ${\cal L}_{\rho}^{\text{old}}(\cdot) = {\cal L}_{\rho}^{\text{new}}(\cdot)$;
            \STATE\quad\quad  Update $w_b$ and $w_e$ by (\ref{upwewb});
            \STATE\quad\quad  Update $\bf v$ by (\ref{upv});
            \STATE\quad\quad  Update $\bf x$ by (\ref{upx});
            \STATE\quad\quad  Update $\bm \theta$ by (\ref{uptheta});
            \STATE\quad\quad  Update $\bf t$ by (\ref{updt});
             \STATE\quad\quad Update $\bm \phi$ by (\ref{nowphi});
             \STATE\quad\quad Update ${\cal L}_{\rho}^{\text{new}} (w_b,w_e,{\bf v},{\bf x},{\bm \theta},{\bf t},{\bm \phi};{\bm \varphi }, {\bm \psi} )$ by (\ref{Lagrangian}); 
            \STATE\quad {\textbf {Until}} $\frac{|{\cal L}_{\rho}^{\text{old}}(\cdot) - {\cal L}_{\rho}^{\text{new}}(\cdot)|}{|{\cal L}_{\rho}^{\text{old}}(\cdot)|}\le \varepsilon $
            \STATE\quad $\text{error} =\max\{\|{\bf t}-{\bf x}\|_{\infty},\|{\bm \phi}-{\bm \theta}\|_{\infty}\}$ 
            \STATE \quad \textbf{If} $\text{error} <\eta $ \textcolor{magenta}{\texttt{\%~Update dual variables}} 
            \STATE \quad \quad  ${\bm \varphi } = {\bm \varphi } + \rho^{-1}({\bf t}-{\bf x})$;
            \STATE \quad  $\quad{\bm \psi} = {\bm \psi} + \rho^{-1}({\bm \phi}-{\bm \theta})$;
            \STATE \quad \textbf{Else} \textcolor{magenta}{\texttt{\%~Update penalty parameter}} 
            \STATE \quad\quad $\rho = c\cdot \rho$;
            \STATE \quad \textbf{End if}
            \STATE\quad  ${\eta = 0.2 \times \text{error} }$, and ${\varepsilon = 0.1 \times \varepsilon }$
		\STATE {\textbf {Until } $\text{error}\le\eta_{\min}$.}\\
	\end{algorithmic}%
\end{algorithm}

\subsubsection{Analysis of computation complexity} To analyze the computational complexity, we focus on the primary contributors $M$ and $N_i$, since $N_e, N_b \ll M, N_i$. Specifically, updating $w_b$ and $w_e$ each has a complexity of approximately ${\cal O}(N_i^2 M)$, updating $\mathbf{v}$ has a complexity of approximately ${\cal O}(M)$, updating $\mathbf{x}$ has a complexity of approximately ${\cal O}(M^3)$, and updating $\bm{\theta}$ has a complexity of approximately ${\cal O}(N_i^3)$, while updating $\mathbf{t}$ and $\bm{\phi}$ has linear complexity, which can be neglected. Assuming the total number of inner-loop iterations is $T_{P_{\text{in}}}$ and the number of outer-loop iterations is $T_{P_{\text{out}}}$, the overall computational complexity of the WMMSE-PDD algorithm is approximately ${\cal O}(T_{P_{\text{out}}} T_{P_{\text{in}}} (N_i^3 + M^3))$.

\subsubsection{Analysis of Convergence} To establish convergence, we first demonstrate that the inner-loop BCD-based algorithm converges to a KKT point of problem (\ref{Lagrangian}), as stated in the following theorem.

\begin{theorem} \label{theorem31} \itshape
  Let ${\bf \mathcal Z}^k = \{w_b^k, w_e^k, {\bf{v}}^k, {\bf{x}}^k, {\bm{\theta}}^k, {\bf{t}}^k, {\bm{\phi}}^k\}$ denote the sequence generated by the inner-loop BCD-based algorithm. Then, every limit point of this sequence is a KKT point of problem (\ref{Lagrangian}).
\end{theorem}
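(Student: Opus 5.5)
The argument follows the standard BCD convergence template (Tseng \cite{tseng2001convergence}; see also the inner-loop analysis in \cite{shi2020penalty}). It has three ingredients: each block update is the unique global minimizer of its subproblem, the objective is monotone along the iterates, and a stationarity condition can be passed to the limit block by block.

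\emph{Step 1: monotonicity and uniqueness.} Because every block is updated to a global minimizer, the sequence ${\cal L}_\rho({\bf \mathcal Z}^k)$ is nonincreasing. The per-block facts are as follows.
\begin{itemize}
\item The $w_b$, $w_e$ and ${\bf v}$ subproblems are strictly convex, and their unique minimizers are given by (\ref{upwewb}) and (\ref{upv}). For ${\bf v}$, the Hessian $w_b({\bf H}_b{\bf x}{\bf x}^H{\bf H}_b^H+{\bf I})$ is positive definite.
\item The ${\bf x}$ update is the unique global optimum by Lemma \ref{lemma33}.
\item The ${\bm\theta}$ subproblem (\ref{uptheta}) has Hessian $2{\bf C}+\rho^{-1}{\bf I}\succ 0$, so its minimizer is unique.
\item The ${\bf t}$ subproblem is a strictly convex projection onto a box.
\item The ${\bm\phi}$ update is the unique global optimum by Lemma \ref{lemma34}, assuming ${\bm\theta}-\rho{\bm\psi}$ has no zero entries.
\end{itemize}
The objective is bounded below: the $-\log w$ terms are compensated by the linear $w$ terms, since $w-\log w\ge 1$ and $w_b(\cdot)\ge w_b\cdot{\rm const}$ after minimizing over ${\bf v}$. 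Hence ${\cal L}_\rho({\bf \mathcal Z}^k)$ converges.

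\emph{Step 2: boundedness.} The iterates ${\bf x}^k$, ${\bf t}^k$ and ${\bm\phi}^k$ lie on compact sets. By the closed forms, ${\bm\theta}^k$ and ${\bf v}^k$ are bounded, and the weights satisfy $w_e^k\in(0,1]$ and $w_b^k\ge 1/(1+\|{\bf v}\|^2+\dots)$. In particular, $w_b^k$ is bounded away from zero and from above, because its inverse equals $1/(1+\|{\bf H}_b{\bf x}\|^2)$ at the optimal ${\bf v}$. So limit points exist.

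\emph{Step 3: vanishing successive differences.} I expect this to be the main obstacle. I would first try to prove $\|{\bf \mathcal Z}^{k+1}-{\bf \mathcal Z}^{k}\|\to 0$ from sufficient decrease.
\begin{itemize}
\item For the strongly convex blocks (${\bf v}$, ${\bm\theta}$, ${\bf t}$, and $w$ on a compact interval), the standard inequality ${\cal L}(\text{old})-{\cal L}(\text{new})\ge\frac{\sigma}{2}\|\text{old}-\text{new}\|^2$ holds with a uniform modulus.
\item For ${\bf x}$, on the sphere the objective equals ${\bf x}^H({\bf A}+\lambda^*{\bf I}+\frac{1}{2\rho}{\bf I}){\bf x}$ plus linear terms, with ${\bf K}\succ0$. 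This yields the same quadratic decrease, using the multiplier from Lemma \ref{lemma33}.
\item For ${\bm\phi}$, the decrease equals $\frac{1}{\rho}\sum_n|\theta_n-\rho\psi_n|\,(1-\cos\Delta_n)$, which is quadratic in the change if $|\theta_n-\rho\psi_n|$ stays bounded away from zero.
\end{itemize}
If this uniform decrease cannot be secured, I would instead use Tseng's argument directly. It needs only the uniqueness of each block minimizer (Step 1) together with continuity of ${\cal L}_\rho$. One takes a convergent subsequence and shows that the limit of the intermediate iterates coincides, via uniqueness, with the limit point.

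\emph{Step 4: passing to the limit.} Write the first-order optimality condition of each block at iteration $k$. For ${\bf x}$ this reads $\nabla_{\bf x}{\cal L}_\rho+\lambda^k{\bf x}^{k}=0$, and $\lambda^k$ is bounded because $\lambda^k=-{\bf x}^{kH}\nabla_{\bf x}{\cal L}_\rho$. For ${\bm\phi}$ the condition involves multipliers $\mu_n$ of the unit-modulus constraints, and for ${\bf t}$ the normal cone of the box. Pass to the limit along the subsequence using Step 3, continuity of the gradients, and outer semicontinuity of the normal cones of the closed sets. Collecting the block conditions at the common limit point yields exactly the KKT system of (\ref{Lagrangian}), since the constraint sets are separable across blocks.
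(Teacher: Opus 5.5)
Your proposal is correct and follows the paper's route. The paper proves exactly your Steps 1--2 (monotone descent with a lower bound, a unique global minimizer for each block via Lemmas~\ref{lemma33} and~\ref{lemma34}, and compactness of the iterates) and then invokes a standard BCD result (Theorem 4.1 in \cite{shi2020penalty}), whose content your Steps 3--4 (the Tseng-type uniqueness argument and passing the blockwise first-order conditions to the limit) simply unpack. One small repair: after minimizing over $w_b$ and ${\bf v}$, the $w_b$-part of the objective is bounded below only by $1-\log(1+\|{\bf H}_b({\bm\theta}){\bf x}\|_2^2)$, which depends on the unconstrained ${\bm\theta}$, so you should get lower boundedness of ${\cal L}_\rho$ and boundedness of ${\bm\theta}^k$ from the quadratic term $\frac{1}{2\rho}\|{\bm\phi}-{\bm\theta}+\rho{\bm\psi}\|_2^2$ dominating this logarithm on the level set ${\cal L}_\rho\le{\cal L}_\rho({\bf \mathcal Z}^0)$, not from the closed form (\ref{resulttheta}), whose right-hand side contains $w_b^k$ and therefore depends on ${\bm\theta}^{k-1}$.
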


\textit{Proof}: See Appendix \ref{appendixD}. $\hfill\blacksquare$

With the convergence of the inner-loop BCD-based algorithm to a KKT point established and the ease of satisfying Robinson’s condition \cite{kuhn2013nonlinear} for problem (\ref{overallproblem2}) shown, we further demonstrate that the proposed WMMSE-PDD algorithm also converges to a KKT point of problem (\ref{overallproblem2}), based on the following result from Theorem 3.1 in \cite{shi2020penalty}.

\begin{theorem} \label{theorem31} \itshape
Let ${\mathcal{Q}}^q = \{w_b^q, w_e^q, \mathbf{v}^q, \mathbf{x}^q, \bm{\theta}^q, \mathbf{t}^q, \bm{\phi}^q\}$ denote the sequence generated by the WMMSE-PDD algorithm in Algorithm \ref{alg:1}. Suppose that the inner-loop BCD-based algorithm converges to a KKT point. Let ${\mathcal{Q}}^*$ be a limit point of the sequence ${\mathcal{Q}}^q$, and assume that Robinson’s condition \cite{kuhn2013nonlinear} holds at ${\mathcal{Q}}^*$ for problem (\ref{overallproblem2}). Then, ${\mathcal{Q}}^*$ is a KKT point of problem (\ref{overallproblem2}).
\end{theorem}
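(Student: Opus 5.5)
The plan is to cast Algorithm \ref{alg:1} as an instance of the generic PDD framework of \cite{shi2020penalty} and then invoke its Theorem 3.1. Concretely, we view (\ref{pdd1}) as a problem of the form $\min f(\mathcal{Z})$ s.t. $\mathbf{h}(\mathcal{Z})=\mathbf{0}$ with $\mathbf{h}=[\mathbf{t}-\mathbf{x};\bm{\phi}-\bm{\theta}]$, together with the ``simple'' constraint set $\mathcal{S}$ defined by (\ref{pdd3}), (\ref{pdd2}), (\ref{pdd4}). The function $f$ in (\ref{pdd1}) is continuously differentiable on the domain $w_b,w_e>0$. The augmented Lagrangian (\ref{Lagrangian}) is exactly the one used in \cite{shi2020penalty}, and the outer updates (dual ascent when the error falls below $\eta$, otherwise $\rho\leftarrow c\rho$, with $\eta,\varepsilon\to 0$) match that scheme.

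The first step is to verify the hypotheses of Theorem 3.1 in \cite{shi2020penalty}. Its key hypothesis is that each inner loop returns an approximate KKT point of (\ref{Lagrangian}) with accuracy $\varepsilon_q\to 0$, and this is supplied by the preceding theorem on the inner BCD loop. We also need the limit point $\mathcal{Q}^*$ to lie in a region where $w_b,w_e$ stay bounded away from $0$ and $\infty$. This follows from the closed forms (\ref{upwewb}): $w_e\in[(1+\|\mathbf{H}_e\|^2)^{-1},1]$ because $\|\mathbf{x}\|=1$ and $|\theta_n|\to 1$ keep the iterates bounded, and an analogous bound holds for $w_b$ because $\mathbb{E}\ge(1+\|\mathbf{H}_b\mathbf{x}\|^2)^{-1}$. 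Boundedness of $\mathbf{x},\mathbf{t},\bm{\phi}$ is immediate from the constraints. For $\bm{\theta}$ we use (\ref{resulttheta}), where $(2\rho\mathbf{C}+\mathbf{I})^{-1}$ has norm at most $1$, provided the dual sequence $\rho\bm{\psi}$ remains bounded. The theorem then gives, at $\mathcal{Q}^*$, feasibility $\mathbf{t}^*=\mathbf{x}^*$, $\bm{\phi}^*=\bm{\theta}^*$, and existence of multipliers satisfying the KKT system of (\ref{pdd1}), with Robinson's condition assumed at $\mathcal{Q}^*$.

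The second step transfers KKT points back to problem (\ref{overallproblem2}). Substituting $\mathbf{t}=\mathbf{x}$ and $\bm{\phi}=\bm{\theta}$, the KKT conditions of (\ref{pdd1}) become those of (\ref{newforMMSE1}). Stationarity in $w_b,w_e,\mathbf{v}$ forces the closed forms (\ref{upwewb}) and (\ref{upv}). By Lemma \ref{lemma32}, the partial maximization over $(w_b,w_e,\mathbf{v})$ yields exactly $-\log$ of the objective ratio in (\ref{objorig2}), up to constants. By the envelope (Danskin) argument, the gradient of (\ref{wwse1}) in $(\mathbf{x},\bm{\theta})$ evaluated at these optimal auxiliaries therefore equals a positive multiple of the gradient of $-$(\ref{objorig2}). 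Hence the $(\mathbf{x},\bm{\theta})$ stationarity conditions coincide, and by Lemma \ref{lemma31} the constraint sets coincide, so $(\mathbf{x}^*,\bm{\theta}^*)$ is a KKT point of (\ref{overallproblem2}).

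I expect the main obstacle to be the boundedness of the dual/penalty sequence, i.e.\ showing that $\rho\bm{\varphi}$ and $\rho\bm{\psi}$ stay bounded so that limit points exist with valid multipliers. The plan is to take this from \cite{shi2020penalty}, where Robinson's condition at $\mathcal{Q}^*$ is exactly what ensures the multiplier estimates $\bm{\varphi}+\rho^{-1}(\mathbf{t}-\mathbf{x})$ are bounded along the subsequence. That argument goes by contradiction: normalizing an unbounded multiplier sequence yields a nonzero vector that violates Robinson's condition. We would reproduce this argument only briefly and otherwise cite it.
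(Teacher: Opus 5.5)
Your proposal is correct and takes the same route as the paper, whose entire proof is an invocation of Theorem~3.1 of \cite{shi2020penalty} applied to the auxiliary-variable reformulation with coupling constraints $\mathbf t=\mathbf x$, $\bm\phi=\bm\theta$. Your second step goes further than the paper: you recover the closed forms (\ref{upwewb})--(\ref{upv}) from stationarity in $(w_b,w_e,\mathbf v)$, use the envelope argument to identify the $(\mathbf x,\bm\theta)$-gradients with those of (\ref{objorig2}), and then apply Lemma~\ref{lemma31}; this spells out the passage from KKT points of the PDD problem to KKT points of (\ref{overallproblem2}), which the paper leaves implicit.
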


\textit{Proof}: See Theorem 3.1 in \cite{shi2020penalty}. $\hfill\blacksquare$

\section{Proposed EPPRGD Algorithm}
Although the algorithm proposed in the previous section is highly efficient, its computational complexity increases due to the introduction of several auxiliary variables, which may slow convergence, especially when the number of antennas becomes extremely large. To enhance convergence speed, we further propose a lightweight EPPRGD algorithm based on advanced Riemannian manifold optimization. By avoiding the introduction of additional auxiliary variables, the EPPRGD algorithm significantly reduces complexity while maintaining satisfactory secrecy performance. The details are presented below.

\subsection{Problem Transformation}
To begin with, based on the one-bit transformation in (\ref{refor1}), and by converting the maximization to a minimization and swapping the numerator and denominator in problem (\ref{overallproblem2}), the problem (\ref{overallproblem2}) can be equivalently reformulated as,
\begin{subequations}
\begin{align}
 \min _{{\bf x}, {\bm \theta}} \quad& \log\frac{{1}+||({\mathbf H}_{ie} {\bf{\Theta }} {\mathbf H}_{ai}+{\mathbf H}_{ae} ) {\bf x}||_2^2}{{1}+||({\mathbf H}_{ib} {\bf{\Theta }} {\mathbf H}_{ai}+{\mathbf H}_{ab} ) {\bf x}||_2^2}, \label{objRe1}\\
  \text { s.t. }\quad &-\sqrt{\tfrac{1}{2M}}\le\Re\{x_m\},\Im\{x_m\}\le\sqrt{\tfrac{1}{2M}},\forall m,\label{Remacon1}\\
  &{\mathbf x}^H{\mathbf x} = 1,\label{Remacon2}\\
 & |{\theta}_{n}| = 1, \forall n. \label{Remacon3}
\end{align}
\label{Remaproblem1}%
\end{subequations}
Generally, constraints (\ref{Remacon2}) and (\ref{Remacon3}) are non-convex, and in the earlier stage, auxiliary variables were introduced to mitigate their difficulty. However, we also observe that these constraints can be interpreted as smooth manifolds over a Riemannian space, which motivates us to solve problem (\ref{Remaproblem1}) by designing the algorithm over these smooth Riemannian manifolds. Nevertheless, since the inequality constraint (\ref{Remacon1}) does not lie on any Riemannian manifold, this approach cannot be directly applied. To address this, we can incorporate it into the objective function as a penalty term to discourage violations. Based on these insights, in the following, an efficient EPPRGD algorithm is developed to solve the problem over the Riemannian space.

\subsection{Realization of the EPPRGD Algorithm}
The proposed EPPRGD algorithm adopts a double-loop structure, where the inner loop employs a PRGD algorithm to solve an unconstrained optimization problem over the Riemannian space. In the outer loop, the EP \cite{di1994exact} method is used to update the penalty parameters and smoothing variables, thereby enforcing satisfaction of the inequality constraint. The details are presented as follows.

Since the inequality constraint (\ref{Remacon1}) does not lie on any Riemannian manifold, manifold-based optimization methods cannot be directly applied. To address this issue, we adopt the EP \cite{di1994exact} method to incorporate constraint (\ref{Remacon1}) into the objective function. Specifically, the exact penalty method replaces the inequality constraint with a weighted penalty term, thereby augmenting the objective to penalize constraint violations. After applying this method, the problem is reformulated as follows,
\begin{equation}
\begin{aligned}
 \min _{{\bf x}, {\bm \theta}} \quad& f_r({\bf x}, {\bm \theta})+{\rho_r}  \textstyle\sum\limits_{m=1}^{M} \left(\begin{array}{l}  
 \max\{0,g_1(\Re\{x_m\}) \}\\+\max\{0,g_1(\Im\{x_m\}) \}\\+\max\{0,g_2(\Re\{x_m\}) \}\\+\max\{0,g_2(\Im\{x_m\}) \}\end{array}\right), \\
 \text {s.t.} \quad&(\ref{Remacon2})\text{ and } (\ref{Remacon3}) \text{ are satisfied}, \label{Reforcon2}
\end{aligned}
\end{equation}
where $f_r({\bf x}, {\bm \theta})$ is the objective function (\ref{objRe1}), $\rho_r>0$ is the penalty parameter, and,
\begin{subequations}
\begin{align}
& g_1(\Re\{x_m\})=\Re\{x_m\}-\sqrt{\tfrac{1}{2M}},\\
&g_1(\Im\{x_m\})=\Im\{x_m\}-\sqrt{\tfrac{1}{2M}},\\
& g_2(\Re\{x_m\})=-\Re\{x_m\}-\sqrt{\tfrac{1}{2M}},\\
& g_2(\Im\{x_m\})=-\Im\{x_m\}-\sqrt{\tfrac{1}{2M}}.
\end{align}
\label{totalett}
\end{subequations}
Notice that the resulting penalized objective function in (\ref{Reforcon2}) remains non-smooth, which may pose challenges for optimization. Fortunately, the cost function in (\ref{Reforcon2}) consists of a sum of maximum functions, a structure that can be exploited algorithmically. Specifically, we adopt the log-sum-exp function \cite{boyd2004convex} to obtain a smooth approximation, as described in the Lemma \ref{lemma44}.

\begin{lemma} \label{lemma44} \itshape
\textit{(Section 3.1.5 in \cite{boyd2004convex})} Given $c_1,c_2,..., c_K \in \mathbb{R}$, it holds for any smoothing parameter $u > 0$ that,
\begin{equation}
    \max_{k=1,...,K} c_k \le u \log \sum_{k=1}^K \exp\left({\frac{c_k}{u}}\right) \le \max_{k=1,...,K} c_k + u \log K.
\end{equation}
\end{lemma}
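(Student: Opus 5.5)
The idea is to reduce both inequalities to the elementary comparison between the largest term of a sum of positive exponentials and the whole sum. Let $c_{\max} = \max_{k} c_k$, attained at some index $k^\star$. Since $u>0$, the map $t \mapsto \exp(t/u)$ is strictly increasing and positive, and $t \mapsto u\log t$ is increasing on $(0,\infty)$. Writing $S = \sum_{k=1}^K \exp(c_k/u)$, the aim is the two-sided bound
\begin{equation*}
\exp\left(\frac{c_{\max}}{u}\right) \le S \le K \exp\left(\frac{c_{\max}}{u}\right).
\end{equation*}
Applying $u\log(\cdot)$ to this chain then gives exactly the statement of Lemma~\ref{lemma44}.

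For the lower bound, every summand in $S$ is strictly positive. Dropping all terms except the $k^\star$-th therefore gives $S \ge \exp(c_{k^\star}/u) = \exp(c_{\max}/u)$. Taking $u\log$ of both sides yields $u\log S \ge c_{\max}$.

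For the upper bound, each $c_k \le c_{\max}$ implies $\exp(c_k/u) \le \exp(c_{\max}/u)$ by monotonicity. Summing over the $K$ indices gives $S \le K\exp(c_{\max}/u)$. Taking $u\log$ then gives $u\log S \le c_{\max} + u\log K$.

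There is no real obstacle here; the only points needing care are that $u>0$, so dividing by $u$ and multiplying $\log$ by $u$ preserve the inequality directions, and that $S>0$, so the logarithm is well defined. One could also note that the lower bound is attained in the limit $u\to 0^+$, which is what justifies using the log-sum-exp as a smoothing of the $\max\{0,\cdot\}$ terms in (\ref{Reforcon2}) with $K=2$ and $c_1=0$. That limit is a remark on how the lemma is used, not part of its proof.
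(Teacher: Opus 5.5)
Your proof is correct and is the standard argument from the cited Section 3.1.5 of Boyd and Vandenberghe; the paper gives no proof of its own and simply defers to that reference. The argument there is the same: sandwich $\sum_{k}\exp(c_k/u)$ between its largest term and $K$ times that term, then apply the increasing map $u\log(\cdot)$, which preserves the inequalities because $u>0$.
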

\textit{Moreover, the inequalities become tight as $u \to 0$}.

Based on Lemma \ref{lemma44}, we can approximate the non-smooth penalty terms. Specifically, by setting $K=2$, $c_1=0$, and $c_2=x$ in the lemma, we derive the smooth approximation for the hinge loss function $\max\{0, x\}$ as follows,
\begin{equation}
\begin{aligned}
   {{ \max\{0, x\} }} {{\le u \log ( \exp(\frac{0}{u}) + \exp(\frac{x}{u}) ) }}
     {{= u \log ( 1 + \exp(\frac{x}{u}) ).}}
\end{aligned}
\end{equation}
Applying this approximation to each penalty term in problem (\ref{Reforcon2}), the optimization problem is smoothly reformulated as,
\begin{equation}
\begin{aligned}
  {\min _{{\bf x}, {\bm \theta}}}\quad&  {{f_r({\bf x}, {\bm \theta})+{\rho_r} u \textstyle\sum\limits_{m=1}^{M} \left(\begin{array}{l}  
 \log(1+e^{\frac{g_1(\Re\{x_m\})}{u}})  \\+\log(1+e^{\frac{g_1(\Im\{x_m\})}{u}})\\+\log(1+e^{\frac{g_2(\Re\{x_m\})}{u}})\\+\log(1+e^{\frac{g_2(\Im\{x_m\})}{u}})\end{array}\right), }}\\
 {{ \text {s.t.} }}\quad& {{(\ref{Remacon2})\text{ and } (\ref{Remacon3}) \text{ are satisfied}.}} \label{Reforcon3}
\end{aligned}
\end{equation}
After reformulation, problem (\ref{Reforcon3}) involves only the manifold-structured constraints (\ref{Remacon2}) and (\ref{Remacon3}). Based on the definition of Riemannian manifolds \cite{lee2018introduction}, constraints (\ref{Remacon2}) and (\ref{Remacon3}) correspond to the complex sphere manifold $\mathcal{M}_{\bf x}$ and the complex circle manifold $\mathcal{M}_{\bm \theta}$, respectively, as given by,
\begin{equation}
    \begin{aligned}
       & \mathcal{M}_{\bf x} = \left\{{\bf x} \in \mathbb{C}^{N_t } \mid \|{\bf x}\|_2 = 1 \right\},\\
       & \mathcal{M}_{{\bm \theta}} = \left\{ {\bm \theta} \in \mathbb{C}^{{N_i}} \mid |\theta_n|=1,\forall n \in N_i \right\}.\label{man1}
    \end{aligned}
\end{equation}
The Cartesian product of multiple individual Riemannian manifolds can form a product Riemannian manifold \cite{boumal2023introduction}. Based on this definition, the manifolds \( \mathcal{M}_{\bf x} \) and $\mathcal{M}_{{\bm \theta}}$ can be combined to form a product Riemannian manifold \( \mathcal{M} \), given by,
\begin{equation}
    \mathcal{M} = \mathcal{M}_{\bf x} \times \mathcal{M}_{{\bm \theta}}. \label{man2}
\end{equation}
To enable optimization over the Riemannian space, the tangent space is required as it provides a local linear approximation of the manifold. For the product Riemannian manifold $\mathcal{M}$, the corresponding tangent space is defined as \cite{boumal2023introduction},
\begin{equation}
{\mathcal T}{\mathcal M} = {\mathcal T}_{\bf x}\mathcal{M}_{\bf x} \oplus {\mathcal T}_{{\bm \theta}}\mathcal{M}_{{\bm \theta}}, \label{tanspcae}
\end{equation}
where \( {\mathcal T}_{\bf x}\mathcal{M}_{\bf x} \) and \( {\mathcal T}_{{\bm \theta}}\mathcal{M}_{{\bm \theta}} \) are the tangent spaces of the individual manifolds \( \mathcal{M}_{\bf x} \) and \( \mathcal{M}_{{\bm \theta}} \), respectively, defined as \cite{boumal2023introduction},
\begin{equation}
    \begin{aligned}
       & {\mathcal T}_{\bf x}\mathcal{M}_{\bf x} = \{ {\bf \Xi}\in\mathbb{C}^{M } \mid \Re\{{\bm \zeta }^H{\bf x}\}= 0 \},\\
       & {\mathcal T}_{{\bm \theta}}\mathcal{M}_{{\bm \theta}} = \{ {\bm \theta}\in\mathbb{C}^{N_{i}} \mid \Re\{{\bm \omega }^*  \odot {\bm \theta}\}= 0 \},\label{tangentres}
    \end{aligned}
\end{equation}
where $\bm{\zeta}$ and $\bm{\omega}$ are tangent vectors within their respective tangent spaces. Based on (\ref{man2})–(\ref{tangentres}), and letting $\bm{\Upsilon} = \mathbf{w} \oplus \bm{\theta}$ denote the collection of the variable set $\{\mathbf{w}, \bm{\theta}\}$, problem (\ref{Reforcon3}) can be reformulated as an unconstrained optimization problem over the product Riemannian manifold $\mathcal{M}$, given as,
\begin{equation}
    \min_{\bm{\Upsilon}\in\mathcal{M}} g_r(\bm{\Upsilon}),\label{manpro3}
\end{equation}
where $g_r(\bm{\Upsilon})$ represents the objective function in the problem (\ref{Reforcon3}). By fixing the penalty parameter $\rho_r$ and the smoothing parameter $u$, the unconstrained optimization problem (\ref{manpro3}) over the Riemannian space can be solved using the widely adopted Riemannian gradient descent (RGD) method \cite{boumal2023introduction}. In the following, we propose the PRGD algorithm based on the RGD framework to solve the problem (\ref{manpro3}), which forms the inner-loop algorithm.

\underline{\textit{Inner Layer Procedure}}\\
\indent In general, the PRGD algorithm extends classical gradient descent to Riemannian manifolds by computing the Riemannian gradient within the tangent space and then mapping the updated point back onto the manifold through a retraction operation. This approach preserves the geometric structure of the manifold while ensuring iterative progress toward convergence. Specifically, the PRGD algorithm cyclically performs the following three steps: (1) calculation of the Riemannian gradients, (2) adaptive adjustment of the step sizes, and (3) update of the feasible solutions, for \(k = 0, 1, \ldots\), until a stopping criterion is satisfied.

\subsubsection{Calculation of Riemannian gradient}
The Riemannian gradient determines the descent direction in the PRGD algorithm. Specifically, the Riemannian gradient $\text{grad}_{\mathcal{M}}\, g_r(\bm{\Upsilon})$ with respect to the variable $\bm{\Upsilon}$ is obtained by projecting its Euclidean gradient onto the tangent space ${\mathcal T}{\mathcal M}$, given by,
\begin{equation}
\begin{aligned}
      & \text{grad}_{\mathcal{M}} g_r({\bf \Upsilon})= \left\{\begin{array}{l}  \text{grad}_{\mathcal{M}_{\bf x}}g_r({\bf \Upsilon})\\\oplus\text{grad}_{\mathcal{M}_{{\bm \theta}}}g_r({\bf \Upsilon}) \end{array} \right\}  \\ &=\left\{\begin{array}{l}  (\nabla_{{\bf x}} g_r({\bf \Upsilon})-\Re\{\nabla_{{\bf x}}^H g_r({\bf \Upsilon}){\bf x}\}{\bf x})\\ 
      \oplus(\nabla_{{\bm \theta}} g_r({\bf \Upsilon})-\Re\{\nabla^*_{{\bm \theta}}g_r({\bf \Upsilon})\odot {\bm \theta} \}\odot{\bm \theta}) \end{array} \right\},\label{calculReg}
\end{aligned}       
\end{equation}
where \( \text{grad}_{\mathcal{M}_{\bf x}}g_r({\bf \Upsilon}) \) and \( \text{grad}_{\mathcal{M}_{{\bm \theta}}}g_r({\bf \Upsilon})  \) are the Riemannian gradients associated with the individual manifolds \( \mathcal{M}_{\bf x} \) and \( \mathcal{M}_{{\bm \theta}} \), respectively. \( \nabla_{{\bf x}} g_r({\bf \Upsilon}) \) and \( \nabla_{{\bm \theta}} g_r({\bf \Upsilon}) \) denote the corresponding Euclidean gradients, given by (\ref{eua1}) and (\ref{eua2}), respectively,
\begin{subequations}
\begin{align}
    &\nabla_{{\bf x}} f({\bf \Upsilon}) =  2({\bf H}_{e}^H({\bm \theta}){\bf Q}_{e}-{\bf H}_{b}^H({\bm \theta}){\bf Q}_{b}){\bf x}\notag\\&+\rho\left\{\begin{array}{l}[\kappa(\tfrac{g_1(\Re\{x_m\})}{u})-\kappa(\tfrac{g_2(\Re\{x_m\})}{u})]\\+j[\kappa(\tfrac{g_1(\Re\{x_m\})}{u})-\kappa(\tfrac{g_2(\Re\{x_m\})}{u})] \end{array}\right\},\label{eua1}\\
    &\nabla_{{\bm \phi}_1}f({\bf \Upsilon})= 2\text{diag}(({\bf H}_{ie}^H{\bf Q}_{e}-{\bf H}_{ib}^H{\bf Q}_{b}){\bf x}{\bf x}^H{\bf H}_{a{i}}^H),\label{eua2}
\end{align}    
\end{subequations}    
where $\kappa(\cdot) \triangleq \tfrac{e^{({\cdot})}}{1+e^{({\cdot})}}$, ${\bf Q}_{e} = \tfrac{{\bf H}_{e}(\bm \theta)}{1 +  \|{\bf H}_{e}({\bm \theta}){\bf x}\|_2^2}$ and ${\bf Q}_{b} = \tfrac{{\bf H}_{b}(\bm \theta)}{1 +  \|{\bf H}_{b}({\bm \theta}){\bf x}\|_2^2}$.

\subsubsection{Determination of the Step Size}
Utilizing the Armijo linear search strategy \cite{boyd2004convex}, we dynamically adjust the step size during updates. This adaptive approach aligns the search step size with variations, preserving the property of non-increasing objective function values and enhancing algorithmic convergence speed. In particular, the chosen step size \( \alpha^k \) has to satisfy,
\begin{equation}
\mathcal{L}({\bf \Upsilon}^{k+1}) -\mathcal{L}({\bf \Upsilon}^{k})\le -\frac{1}{2}\alpha^k \|\text{grad}_{\mathcal{M}} \mathcal{L}({\bf \Upsilon}^k)\|_F^2. \label{stepsize}
\end{equation}

\subsubsection{Update and Retraction}
With the Riemannian gradient obtained from (\ref{calculReg}) and the step size determined by (\ref{stepsize}), the update in \( \mathcal{T} \mathcal{M} \) is given by,
\begin{equation}
    {  \bf \hat \Upsilon} = {\bf \Upsilon}^{k} - \alpha^k \text{grad}_{\mathcal{M}} f({\bf \Upsilon}^k).\label{Updateupsi}
\end{equation}
Since product Riemannian manifold \( \mathcal{M} \) is nonlinear, updates may result in \( {\bf \hat \Upsilon} \) falling outside of \( \mathcal{M} \). To address this, a retraction operation $\text{Ret}(\cdot)$ is applied to map the updated point back onto \( \mathcal{M} \), as given by,
\begin{equation}
   {\bf \Upsilon}^{k+1}= \text{Ret}( {  \bf \hat \Upsilon})= \text{Ret}( {  \bf \hat w}\oplus{ \bm {\hat \theta}})=({\bf \hat w}\oslash{\|{\bf \hat w}\|_2^2}) \oplus ( {\bm {\hat \theta}} \oslash |{\bm {\hat \theta}}|). \label{useforretrac}
\end{equation}

\underline{\textit{Outer Layer Procedure}}\\
\indent When the inner-loop PRGD algorithm converges, we update both the penalty parameter $\rho_r$ and the smoothing parameter $u$ in the outer loop. Specifically, if the maximum violation $\max\{g_1(\Re\{x_m\}), g_1(\Im\{x_m\}), g_2(\Re\{x_m\}), g_2(\Im\{x_m\}), \forall m\}$ is approximately satisfied, i.e., all terms are simultaneously smaller than a predefined diminishing threshold $\tau$, it indicates that the current penalty parameter $\rho_r$ is sufficient to eliminate the violation, and thus $\rho_r$ is kept unchanged. Otherwise, we increase $\rho_r$ to further enforce the satisfaction of the constraint, according to $\rho_r = c_r \cdot \rho_r$, where $c_r > 1$ is a predetermined positive constant. Concurrently, the smoothing parameter is reduced following $u = 0.2 \times u$ to progressively narrow the gap between the reformulated problem (\ref{Reforcon3}) and the original problem (\ref{Reforcon2}). This gradual decay strategy circumvents the numerical intractability of directly initializing with a small $u$, which would induce excessive landscape stiffness as the Lipschitz constant of the gradient scales inversely with the parameter \cite{nesterov2005smooth}. Instead, the proposed path-following approach enables the solver to warm-start effectively from the previous iteration, thereby ensuring stability \cite{grossmann2016smoothing}.

\subsection{Overview of the EPPRGD Algorithm}
Based on the above discussion, the complete EPPRGD algorithm for solving problem (\ref{overallproblem2}) is summarized in Algorithm \ref{alg:2}. Due to the benefit of not introducing auxiliary variables, the EPPRGD algorithm in Algorithm \ref{alg:2} achieves faster convergence than Algorithm \ref{alg:1}, as will be shown later in the numerical results section. Moreover, we also prove that it achieves KKT convergence.

\begin{algorithm}
	\floatname{algorithm}{Algorithm}
	\renewcommand{\algorithmicrequire}{\textbf{Input:}}
	\renewcommand{\algorithmicensure}{\textbf{Output:}}
	\caption{: EPPRGD algorithm to solve (\ref{overallproblem2}).}
	\label{alg:2}
	\begin{algorithmic}[1]
		\REQUIRE 
                ${\bf x},{\bm \theta},  \rho_r,c_r,\tau, u, \textcolor{blue}{u_{{\min}}},\varepsilon_r$, and $g_r^{\text{new}}(\cdot)$.\\
            \STATE {\textbf {Repeat}} \textcolor{magenta}{\texttt{\%~Outer EPPRGD loop}} 
            \STATE\quad Combine ${\bm \Upsilon}={\bf x} \oplus  {\bm \theta}$
            \STATE\quad {\textbf {Repeat}} \textcolor{magenta}{\texttt{\%~Inner PRGD loop -- solve (\ref{manpro3})}} 
            \STATE\quad\quad $g_r^{\text{old}}(\cdot) = g_r^{\text{new}}(\cdot)$;
            \STATE\quad\quad  Update Riemannian gradient $\text{grad}_{\mathcal{M}} g_r({\bf \Upsilon})$ by (\ref{calculReg});
            \STATE\quad\quad  Update step size $\alpha$ by (\ref{stepsize});
            \STATE\quad\quad  Update next feasible ${\bm \Upsilon}$ by (\ref{useforretrac});
             \STATE\quad\quad Update $g_r^{\text{new}}(\cdot)$ by (\ref{manpro3}); 
            \STATE\quad {\textbf {Until}} $\frac{|g_r^{\text{old}}(\cdot)-g_r^{\text{new}}(\cdot)|}{|g_r^{\text{old}}(\cdot)|}\le \varepsilon_r $
            \STATE\quad Decompose ${\bf \Upsilon}$ to obtain ${\bf x}$ and ${\bm \theta}$;
            \STATE\quad $\text{error}_r =\max\left\{\begin{array}{l} g_1(\Re\{x_m\}), g_1(\Im\{x_m\}),\\g_2(\Re\{x_m\}), g_2(\Im\{x_m\})\end{array}\right\},\forall m$
            \STATE \quad \textbf{If} $\text{error}_r \le\tau $ \textcolor{magenta}{\texttt{\%~Penalty is enough}} 
            \STATE \quad \quad  $\rho_r=\rho_r$;
            \STATE \quad \textbf{Else} \textcolor{magenta}{\texttt{\%~Penalty is not enough}} 
            \STATE \quad\quad $\rho_r=c_r\cdot\rho_r$;
            \STATE \quad \textbf{End if}
            \STATE\quad  ${u = 0.2 \times u }$, and ${\varepsilon_r = 0.1 \times \varepsilon_r }$
		\STATE {\textbf {Until } $\text{error}_r\le \tau$ and $u\le u_{\min}$.}\\
	\end{algorithmic}%
\end{algorithm}

\subsubsection{Analysis of computation complexity} 
Similar to the previous analysis, we focus on the main parameters $M$ and $N_i$. In general, the complexity of the EPPRGD algorithm is primarily dominated by the calculation of the Riemannian gradients $\text{grad}_{\mathcal{M}_{\mathbf{x}}}g_r(\bm{\Upsilon})$ and $\text{grad}_{\mathcal{M}_{\bm{\theta}}}g_r(\bm{\Upsilon})$. Specifically, the computation of $\text{grad}_{\mathcal{M}_{\mathbf{x}}}g_r(\bm{\Upsilon})$ has a complexity of approximately ${\cal O}(M^2N_i)$, while the computation of $\text{grad}_{\mathcal{M}_{\bm{\theta}}}g_r(\bm{\Upsilon})$ has a complexity of approximately ${\cal O}(N_i^2M)$. Assuming the total number of inner-loop iterations is $T_{R_{\text{in}}}$ and the number of outer-loop iterations is $T_{R_{\text{out}}}$, the overall computational complexity of the EPPRGD algorithm is approximately ${\cal O}(T_{R_{\text{out}}} T_{R_{\text{in}}} N_iM(N_i + M))$.

\subsubsection{Joint complexity analysis and hardware efficiency} Having presented the two proposed optimization algorithms, namely WMMSE-PDD in Section III and EPPRGD in this section, we now provide a unified analysis of their computational complexity and hardware efficiency. This comparison highlights the practical advantages of our proposed designs over the conventional SDR-based benchmark.

We utilize computational complexity as a theoretical proxy for FPGA resource consumption (e.g., digital signal processing (DSP) slices and logic units) \cite{meyer2007digital}. As summarized in Table \ref{tab:complexity}, the conventional SDR-based method incurs a prohibitive complexity order of $\mathcal{O}((M+N_i)^{6})$, typically requiring iterative interior-point methods that are computationally expensive for hardware implementation. In contrast, both of our proposed algorithms significantly reduce the computational burden compared to the SDR benchmark. The WMMSE-PDD algorithm involves matrix inversions with a complexity of $\mathcal{O}((M + N_i)^3)$. Similarly, the EPPRGD algorithm relies primarily on gradient calculations, with a complexity of $\mathcal{O}(M^2 N_i + MN_i^2)$. Both approaches achieve a substantial reduction in computational cost compared to the $\mathcal{O}((M+N_i)^{6})$ complexity of traditional SDR methods. This low-complexity characteristic directly translates to fewer required logic gates and lower power consumption on FPGA platforms, validating the scalability of both proposed schemes for practical deployment.

Furthermore, it is worth noting that although the computational complexity of each inner iteration in the proposed EPPRGD algorithm is similar to that of the WMMSE-PDD algorithm, we observed that it requires significantly fewer inner iterations, greatly enhancing the convergence speed, as demonstrated in Fig. \ref{Con_1} and Fig. \ref{Con_2}. Additionally, the slight performance degradation of the EPPRGD algorithm is attributed to the larger number of penalty terms (approximately $4M$ terms) compared to only $M + N_i$ terms in the WMMSE-PDD algorithm. This increase in penalty terms imposes a greater challenge in balancing convergence speed with the achievable secrecy rate.

\begin{table}[!t]
\centering
\caption{Theoretical Complexity Analysis (FPGA Proxy)}
\label{tab:complexity}
\renewcommand{\arraystretch}{1.2}
\begin{tabular}{|l|l|l|}
\hline
\textbf{Algorithm} & \textbf{Complexity Order} & \textbf{Dominant Operation} \\ \hline
SDR-IRS (Benchmark) & $\mathcal{O}((M+N_i)^{6})$ & Interior Point Method \\ \hline
\textbf{WMMSE-PDD} & $\mathcal{O}((M + N_i)^3)$ & Matrix Inversion \\ \hline
\textbf{EPPRGD} & $\mathcal{O}(M^2 N_i + N_i^2 M)$ & Gradient Calculation \\ \hline
\end{tabular}
\end{table}

\subsubsection{Analysis of Convergence}
To establish convergence, we first show that the inner-loop PRGD algorithm converges to a KKT point, as stated in the following theorem.

\begin{theorem} \label{theorem3}  \itshape
Let $\{\bm{\Upsilon}^k\}=\{{\bf w}^k,{\bm \theta}^k\}$ be a sequence generated by the PRGD algorithm. Then, every limit point of this sequence is a KKT point of problem (\ref{manpro3}).
\end{theorem}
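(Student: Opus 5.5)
The plan is to treat (\ref{manpro3}) as a Euclidean problem in $(\mathbf{x},\bm{\theta})$ with the equality constraints $\mathbf{x}^H\mathbf{x}=1$ and $|\theta_n|^2=1$, $\forall n$, and to carry out two steps. First, a standard Armijo/RGD argument shows that the Riemannian gradient $\text{grad}_{\mathcal{M}} g_r$ vanishes at every limit point of $\{\bm{\Upsilon}^k\}$. Second, I show that a vanishing Riemannian gradient is equivalent to the existence of Lagrange multipliers satisfying the KKT system of (\ref{manpro3}). Throughout, $\rho_r$ and $u$ are fixed, so $g_r$ is real-analytic in $(\Re\{\cdot\},\Im\{\cdot\})$. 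This holds because $f_r$ is a log-ratio of strictly positive quadratics, and the softplus terms $u\log(1+e^{(\cdot)/u})$ are smooth with bounded derivatives.

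\textbf{Step 1 (stationarity of limit points).} The product manifold $\mathcal{M}=\mathcal{M}_{\bf x}\times\mathcal{M}_{\bm\theta}$ is compact. Together with smoothness of $g_r$, this implies that $g_r\circ\text{Ret}$ has a Lipschitz-type second-order bound along retraction curves: there is $L>0$ such that $g_r(\text{Ret}_{\bm\Upsilon}(\bm\xi))\le g_r(\bm\Upsilon)+\Re\langle \text{grad}\, g_r(\bm\Upsilon),\bm\xi\rangle+\tfrac{L}{2}\|\bm\xi\|^2$ uniformly on $\mathcal{M}$. I will use the normalization retraction, i.e.\ projection onto the sphere and onto the circle component-wise, which is a valid second-order-bounded retraction on compact sets.

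Consequently, the backtracking search satisfying (\ref{stepsize}) terminates with $\alpha^k\ge\bar\alpha>0$, where $\bar\alpha$ is independent of $k$ (for example, the initial trial step times the backtracking factor times $1/L$). The objective sequence is then non-increasing and bounded below, since $g_r\ge f_r\ge -\log(1+\max\|{\bf H}_b{\bf x}\|^2)$ on the compact set. Summing (\ref{stepsize}) gives $\sum_k \tfrac{\bar\alpha}{2}\|\text{grad}\,g_r(\bm\Upsilon^k)\|^2<\infty$, so $\|\text{grad}\,g_r(\bm\Upsilon^k)\|\to 0$. By continuity of $\text{grad}\,g_r$ on $\mathcal{M}$ (see (\ref{calculReg})), every limit point $\bm\Upsilon^*$ satisfies $\text{grad}_{\mathcal{M}}g_r(\bm\Upsilon^*)=\mathbf{0}$.

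\textbf{Step 2 (Riemannian stationarity $\Rightarrow$ KKT of (\ref{manpro3})).} From (\ref{calculReg}), $\text{grad}\,g_r(\bm\Upsilon^*)=\mathbf{0}$ means $\nabla_{\bf x}g_r=\lambda^*\mathbf{x}^*$ with $\lambda^*=\Re\{\nabla_{\bf x}^Hg_r\,\mathbf{x}^*\}$. It also means $\nabla_{\bm\theta}g_r=\bm\mu^*\odot\bm\theta^*$ with $\mu_n^*=\Re\{(\nabla_{\bm\theta}g_r)_n^*\theta_n^*\}$. The Lagrangian of (\ref{manpro3}) is $g_r-\lambda(\mathbf{x}^H\mathbf{x}-1)-\sum_n\mu_n(|\theta_n|^2-1)$. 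Its conjugate (Wirtinger) gradients are $\nabla_{\bf x}g_r-\lambda\mathbf{x}$ and $\nabla_{\bm\theta}g_r-\bm\mu\odot\bm\theta$, up to the common factor convention of the complex gradient. With the real multipliers $(\lambda^*,\bm\mu^*)$ above, these gradients are zero. Primal feasibility holds automatically because the retraction keeps every iterate on $\mathcal{M}$ and $\mathcal{M}$ is closed. Hence $(\mathbf{x}^*,\bm\theta^*,\lambda^*,\bm\mu^*)$ satisfies the KKT system.

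To confirm these multipliers are legitimate, I will also check LICQ. The constraint gradients $\mathbf{x}\oplus\mathbf{0}$ and $\mathbf{0}\oplus(\theta_n\mathbf{e}_n)$, $n=1,\dots,N_i$, have disjoint supports and are nonzero on $\mathcal{M}$, so they are linearly independent. Therefore the KKT conditions are genuine necessary conditions and the multipliers are unique.

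The main obstacle is Step 1's uniform lower bound on $\alpha^k$. It needs the constant $L$ to be uniform, which requires bounding the Hessian of $g_r$ on the compact manifold. Each softplus term's second derivative is bounded by $\rho_r/(4u)$, and $f_r$'s Hessian is bounded because the denominators are at least $1$. The retraction's second-order behaviour must also be bounded. I would first try to cite the standard result for RGD with Armijo backtracking on compact manifolds (Boumal, Ch.~4) and verify its hypotheses with these explicit bounds.
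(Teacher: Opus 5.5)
Your proposal is correct and follows essentially the same route as the paper's proof: the Armijo condition gives sufficient decrease, and a telescoping sum against the lower bound of $g_r$ forces $\|\operatorname{grad}_{\mathcal{M}} g_r(\bm{\Upsilon}^k)\|\to 0$, after which continuity of the Riemannian gradient transfers this to every limit point. Your argument is more complete than the paper's, which treats $c_{\text{dec}}=\tfrac{1}{2}\alpha^k$ as a $k$-independent constant without showing $\inf_k\alpha^k>0$ and leaves the multipliers implicit: your uniform bound on $g_r\circ\mathrm{Ret}$ over the compact $\mathcal{M}$ supplies the missing step-size bound (it should read $\alpha^k\ge\min\{\alpha_0,\beta/L\}$, with $\alpha_0$ the initial trial step and $\beta$ the backtracking factor, rather than their product with $1/L$), and your Step~2 with LICQ makes the passage from $\operatorname{grad}_{\mathcal{M}}g_r(\bm{\Upsilon}^*)=\mathbf{0}$ to the KKT system explicit.
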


\textit{Proof}: See Appendix \ref{appendixC}. $\hfill\blacksquare$

Building on the convergence of the inner-loop PRGD algorithm and the establishment of the linear independence constraint qualification (LICQ) condition for problem (\ref{overallproblem2}), we further demonstrate that the proposed EPPRGD algorithm in Algorithm \ref{alg:2} converges to a KKT point of problem (\ref{overallproblem2}), as established by the following theorem based on Proposition 4.2 in \cite{liu2020simple}.

\begin{theorem} \label{theorem2} \itshape
Let $\{\bm{\Upsilon}^k\}$ be the sequence generated by Algorithm~\ref{alg:2}. Assume that, at each outer iteration, the inner-loop PRGD yields a KKT point of the subproblem (as established in Theorem~\ref{theorem3}), and that the initial penalty parameter $\rho_r^0$ is sufficiently large. Furthermore, consider the limiting case where $u_{\min} \to 0$, ensuring that the smoothing sequence $\{u^k\}$ vanishes. If $\bm{\Upsilon}^*$ is a feasible limit point of $\{\bm{\Upsilon}^k\}$ at which the LICQ holds \cite{kuhn2013nonlinear}, then $\bm{\Upsilon}^*$ is a KKT point of problem~(\ref{overallproblem2}).
\end{theorem}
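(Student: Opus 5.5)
The plan is to pass to the limit in the Riemannian KKT conditions of the smoothed subproblems, in the spirit of Proposition 4.2 in \cite{liu2020simple}, and read off multipliers for the inequality constraints (\ref{Remacon1}) from the smoothed penalty derivatives.

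\textbf{Step 1: Stationarity at each outer iteration.} Let $k$ index outer iterations, with parameters $(\rho_r^k,u^k)$. By Theorem \ref{theorem3}, the iterate $\bm{\Upsilon}^k=(\mathbf{x}^k,\bm\theta^k)$ satisfies $\text{grad}_{\mathcal{M}} g_r(\bm{\Upsilon}^k)=\mathbf{0}$.

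Using (\ref{calculReg}) and the Euclidean gradient (\ref{eua1}), this condition can be rewritten on the ambient space. For the $\mathbf{x}$-block:
\begin{equation*}
\nabla_{\mathbf x} f_r(\bm{\Upsilon}^k)+\rho_r^k\sum_{m}\Big(\big(\kappa(\tfrac{g_1(\Re\{x_m^k\})}{u^k})-\kappa(\tfrac{g_2(\Re\{x_m^k\})}{u^k})\big)+j\big(\kappa(\tfrac{g_1(\Im\{x_m^k\})}{u^k})-\kappa(\tfrac{g_2(\Im\{x_m^k\})}{u^k})\big)\Big)\mathbf{e}_m=\lambda^k\mathbf{x}^k .
\end{equation*}
Here $\lambda^k=\Re\{\nabla_{\mathbf x}^H g_r\,\mathbf{x}^k\}$ is the multiplier of the sphere constraint (\ref{Remacon2}).

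For the $\bm\theta$-block we get $\nabla_{\bm\theta} f_r=\bm\mu^k\odot\bm\theta^k$ with $\bm\mu^k\in\mathbb{R}^{N_i}$. These are the multipliers of the modulus constraints (\ref{Remacon3}).

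We then define candidate multipliers for (\ref{Remacon1}) by $\nu^k_{1,m}=\rho_r^k\kappa(g_1(\Re\{x_m^k\})/u^k)\ge0$, and analogously for the other three families. With these definitions, the display above is exactly the Lagrangian stationarity condition of the Euclidean formulation (\ref{Remaproblem1}), which is equivalent to (\ref{overallproblem2}) by Lemma \ref{lemma31}.

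\textbf{Step 2: Complementary slackness in the limit.} Pass to a subsequence with $\bm{\Upsilon}^k\to\bm{\Upsilon}^*$, where $\bm{\Upsilon}^*$ is feasible. Take a constraint that is inactive at $\bm{\Upsilon}^*$, e.g. $g_1(\Re\{x_m^*\})<0$. Then $g_1/u^k\to-\infty$, so $\kappa\to0$ exponentially fast.

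If $\rho_r^k$ stays bounded (the assumption that $\rho_r^0$ is large enough makes the violation test pass eventually, so $\rho_r$ freezes), then $\nu^k\to0$ for every inactive constraint. Note also that $g_1$ and $g_2$ for the same component cannot both be active, since $\sqrt{1/(2M)}>0$.

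\textbf{Step 3: Boundedness of the multipliers (main obstacle).} Writing everything in real coordinates, the stationarity relation is a linear system in $(\lambda^k,\bm\mu^k,\nu^k_{\text{active}})$. Its coefficient vectors are the gradients of the active constraints at $\bm{\Upsilon}^k$, and these converge to the corresponding gradients at $\bm{\Upsilon}^*$.

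Suppose the multiplier vector were unbounded. Normalize it and pass to a further limit. The terms $\nabla f_r(\bm{\Upsilon}^k)$ are bounded, since $f_r$ is smooth on a compact set, and the inactive $\nu^k$ vanish. The normalized limit would then be a nontrivial vanishing combination of active constraint gradients, which contradicts the LICQ at $\bm{\Upsilon}^*$.

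This boundedness step is the delicate one. In particular, care is needed with the case $\rho_r^k\to\infty$. The LICQ argument handles it without needing $\rho_r$ bounded, because only the active multipliers can grow, and the inactive ones are $\rho_r^k e^{-c/u^k}$. That quantity still tends to $0$ provided $u^k$ decays geometrically while $\rho_r^k$ grows at most geometrically. Since $u^k=0.2^k u^0$ and $\rho_r^k\le c_r^k\rho_r^0$, this holds.

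\textbf{Step 4: Conclusion.} With the multipliers bounded, extract a convergent subsequence of $(\lambda^k,\bm\mu^k,\nu^k)\to(\lambda^*,\bm\mu^*,\nu^*)$. In the limit:
\begin{itemize}
\item stationarity holds at $\bm{\Upsilon}^*$, by continuity of $\nabla f_r$;
\item $\nu^*\ge0$;
\item $\nu^*$ vanishes on the inactive constraints;
\item primal feasibility holds by assumption.
\end{itemize}
Hence $\bm{\Upsilon}^*$ is a KKT point of (\ref{Remaproblem1}), and therefore of problem (\ref{overallproblem2}) through Lemma \ref{lemma31}.
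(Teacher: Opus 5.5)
Your argument is correct as a proof of the stated implication, but it takes a different route from the paper. The paper stays intrinsic on $\mathcal{M}$. It writes $\operatorname{grad}_{\mathcal M}g_r=\operatorname{grad}_{\mathcal M}f_r+\sum_i\nu_i^k\operatorname{grad}_{\mathcal M}c_i$ with $\nu_i^k=\rho_r^k\gamma_i^k\le\rho_r^k$. It bounds the multipliers by claiming that $\rho_r^{k_\ell}$ stays bounded: since $\text{error}_r(\bm\Upsilon^{k_\ell})\to0$, the penalty test fires only finitely often along the subsequence. It then passes to the limit with parallel transport along minimizing geodesics. LICQ is assumed but never used. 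You instead turn Riemannian stationarity into Euclidean KKT conditions for (\ref{Remaproblem1}) with explicit sphere and torus multipliers $\lambda^k,\bm\mu^k$, so the limit is plain continuity in the ambient space. You bound the multipliers by the classical LICQ normalization argument, and dispose of $\rho_r^k\to\infty$ via $\rho_r^k\kappa(c_i/u^k)\le\rho_r^0c_r^k\exp(-\delta 5^k/u^0)\to0$ on inactive constraints. That rate comparison is a genuine improvement. The paper's step from ``finitely many triggers along $\{k_\ell\}$'' to ``$\{\rho_r^{k_\ell}\}$ bounded'' does not follow, because $\rho_r$ is nondecreasing and can be increased at outer iterations outside the subsequence. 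Your argument needs neither that step nor the hypothesis that $\rho_r^0$ is sufficiently large, so the parenthetical in your Step 2 can be dropped.

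You should know, however, that the LICQ your Step 3 rests on can never hold in this problem. Write $\mathbf y^*=[\Re\{\mathbf x^*\};\Im\{\mathbf x^*\}]\in\mathbb R^{2M}$ and $a=\sqrt{1/(2M)}$. At any feasible $\bm\Upsilon^*$, Lemma~\ref{lemma31} forces $y_j^*=s_ja$ with $s_j\in\{\pm1\}$ for every $j$. So exactly $2M$ box constraints are active, with gradients $s_j\mathbf e_j$. Since $\sum_j|y_j^*|\,s_j\mathbf e_j=\mathbf y^*$ is normal to the sphere, these gradients are linearly dependent in both senses: in the Euclidean sense together with the sphere gradient, and intrinsically as $2M$ vectors in a $(2M-1)$-dimensional tangent space. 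The theorem's hypothesis is therefore never met, so both the statement and your Step 3 are vacuous.

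The conclusion is nonetheless true at every feasible limit point, with no constraint qualification and no control of $\rho_r$. The penalty does not depend on $\bm\theta$, so $\operatorname{grad}_{\mathcal M_{\bm\theta}}f_r(\bm\Upsilon^k)=\mathbf 0$ at every outer iterate, and hence at $\bm\Upsilon^*$ by continuity. The $\mathbf x$-block KKT system $\nabla_{\mathbf y}f_r(\bm\Upsilon^*)+\sum_j\nu_js_j\mathbf e_j=\lambda\mathbf y^*$ is always solvable. Take $\lambda\ge\max_j|\partial_{y_j}f_r(\bm\Upsilon^*)|/a$, set $\nu_j=\lambda a-s_j\partial_{y_j}f_r(\bm\Upsilon^*)\ge0$ on the active constraints, and put zero multipliers on the inactive ones. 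This replaces your Steps~2--3 entirely. It also shows that, for the precoder, the KKT property carries no information beyond feasibility.
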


\textit{Proof}: See Appendix \ref{appendixF}. $\hfill\blacksquare$

\section{Numerical Results}
In this section, we provide simulation results to evaluate the performance of the SRM design for the proposed IRS-assisted mMIMO system. For comparison, the proposed method is compared to the following benchmark schemes:
\begin{itemize}
\item IRS ($\infty$ bits): The IRS-assisted secure precoding design in \cite{asaad2022secure}, which assumes an infinite-resolution precoder. Since the one-bit constraint is relaxed, this method serves as the upper bound.
\item SDR-IRS (one bit): This approach alternately optimizes (AO) the one-bit precoding vector and the IRS phase shifts using the semi-definite relaxation (SDR) methods in \cite{wei2020transmit} and \cite{cui2019secure}, respectively. While conceptually straightforward, this scheme has not been previously reported.
\item DP-IRS (one bit): This method first applies the design in 'IRS ($\infty$ bits)', followed by naive one-bit quantization of the precoder via direct projection.
\item WOIRS ($\infty$ bits): This baseline follows the secure precoding design in \cite{8714023}, where the precoding vector is optimized with infinite resolution without IRS assistance.
\item WOIRS (one bit): Based on the one-bit secure precoding method in \cite{loukil2023physical}, this scheme does not incorporate IRS assistance.
\end{itemize}

Following \cite{asaad2022secure,kong2024distributed,xu2021artificial}, we model the channels as comprising both large-scale and small-scale fading components. The large-scale path loss is given by $L(d) = C_0 (d/D_0)^{-\nu }$, where \( d = \sqrt{(x_r - x_t)^2+(y_r - y_t)^2 } \) denotes the distance between the transmitter and receiver, with \( (x_t, y_t) \) and \( (x_r, y_r) \) representing their respective locations. Here, \( \nu \) is the path loss exponent, and \( C_0 = -30 \) dB represents the path loss at the reference distance \( D_0 = 1 \) m.

For the direct channels ${\bf \hat H}_{ae}$ and ${\bf \hat H}_{ab}$, the small-scale fading is modeled as a complex Gaussian distribution due to extensive scattering. However, for the IRS-related channels \( \mathbf{\hat H}_{ai} \), \( \mathbf{\hat H}_{ib} \), and \( \mathbf{\hat H}_{ie} \), small-scale fading follows a Rician distribution, given by,    
\begin{equation}
    \mathbf{\tilde{H}} = \sqrt{\tfrac{\vartheta }{1+\vartheta }} \mathbf{\tilde{H}}^{\text{LOS}} + \sqrt{\tfrac{1}{1+\vartheta }} \mathbf{\tilde{H}}^{\text{NLOS}},
\end{equation}  
where \( \mathbf{\tilde{H}}^{\text{LOS}} = \mathbf{a}_r(\iota_r) \mathbf{a}_t^H(\iota_t) \) represents the line-of-sight (LoS) component, where \( \mathbf{a}_r(\iota_r) \) and \( \mathbf{a}_t(\iota_t) \) are the steering vectors of the receive and transmit arrays, respectively, expressed as \( \mathbf{a}_j = [1, e^{j\pi\sin(\iota_j)}, \dots, e^{j\pi(M_j-1)\sin(\iota_j)}], \forall j \in \{r,t\} \), with the angles of departure (AoD) and arrival (AoA) given by \( \iota_t = \tan^{-1}(\frac{y_r - y_t}{x_r - x_t}) \) and \( \iota_r = \pi - \iota_t \), respectively. \(\mathbf{\tilde{H}}^{\text{NLOS}} \) is the non-line-of-sight (NLOS) component follows a complex Gaussian distribution. 

Unless otherwise specified, the parameters are set as follows: Alice, Bob, and Eve are equipped with \( M = 128 \), \( N_b = 16 \), and \( N_e = 16 \) antennas, respectively. The IRS consists of \( N_i = 256 \) reflecting coefficients. The total transmit power is \( P = 30 \) dBm, while the noise power at Bob and Eve is \( \sigma_b^2 = \sigma_e^2 = -50 \) dBm. 

The path loss exponents for the channels \( \mathbf{\hat H}_{ai} \), \( \mathbf{\hat H}_{ib} \), \( \mathbf{\hat H}_{ie} \), \( \mathbf{\hat H}_{ab} \), and \( \mathbf{\hat H}_{ae} \) are set to \( \nu_{ai} = 2.2 \), \( \nu_{ib} = 2.5 \), \( \nu_{ie} = 2.5 \), \( \nu_{ab} = 3.5 \), and \( \nu_{ae} = 3.5 \), respectively. The Rician factor is \( \vartheta  = 5 \). The coordinates of Alice, IRS, Eve, and Bob are given as (0,0) m, (50,0) m, (45,2) m, and (55,2) m, respectively. All results are averaged over 5,00 independent realizations and obtained on a PC with a 3.8 GHz Intel Core Ultra 7 155H processor.

Figs. \ref{Con_1} and \ref{Con_2} illustrate the convergence behavior of the WMMSE-PDD and EPPRGD algorithms presented in Algorithms \ref{alg:1} and \ref{alg:2}, respectively. Firstly, Figs. \ref{Con_1}(a) and \ref{Con_2}(a) show the outer-loop convergence characteristics of these algorithms. Both algorithms exhibit a monotonic increase in secrecy rate and a monotonic decrease in the maximum violation as the iteration number grows. Moreover, convergence for both methods is attained within six iterations, with the maximum violation error and $\text{error}_r$ falling below $10^{-5}$. Notably, the WMMSE-PDD algorithm achieves a higher secrecy rate of 7.947 bps/Hz compared to approximately 7.384 bps/Hz attained by the EPPRGD algorithm, underscoring the performance advantage of the WMMSE-PDD algorithm. Furthermore, the inner-loop behaviors of the WMMSE-PDD and EPPRGD algorithms are presented in Figs. \ref{Con_1}(b) and \ref{Con_2}(b), respectively. Both algorithms demonstrate monotonic decreases within each inner loop. It can also be observed that the objective function value at the end of each inner loop is smaller than that at the start of the subsequent inner loop. The observed increase in the objective function value at the beginning of each subsequent inner loop arises due to increments in the penalty term. Additionally, the EPPRGD algorithm requires significantly fewer inner-loop iterations (50 iterations) to achieve overall convergence compared with the WMMSE-PDD algorithm (200 iterations), highlighting the computational efficiency advantage of the EPPRGD algorithm.

\begin{figure}[t]
  \begin{center}
  \includegraphics[width=3in]{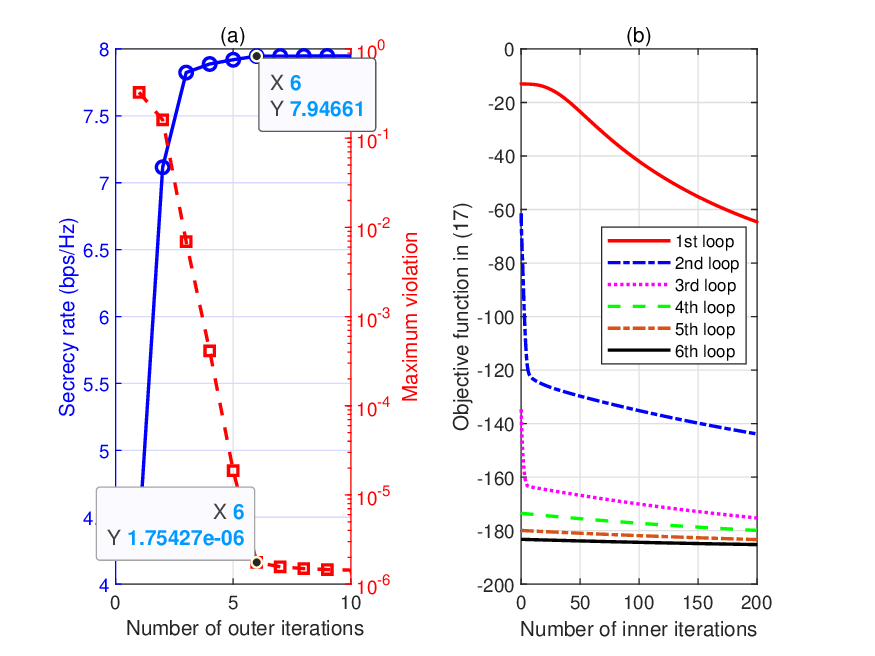}\\
  \caption{Convergence performance of the WMMSE-PDD algorithm.}\label{Con_1}
  \end{center}
\end{figure}

\begin{figure}[t]
  \begin{center}
  \includegraphics[width=3in]{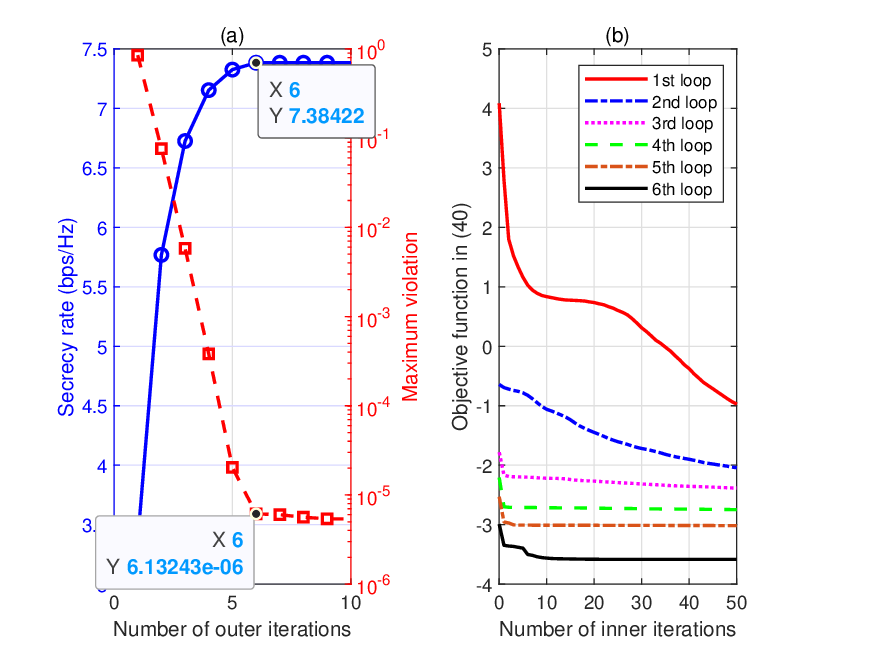}\\
  \caption{Convergence performance of the EPPRGD algorithm.}\label{Con_2}
  \end{center}
\end{figure}

\begin{figure}[t]
  \begin{center}
  \includegraphics[width=2.5in]{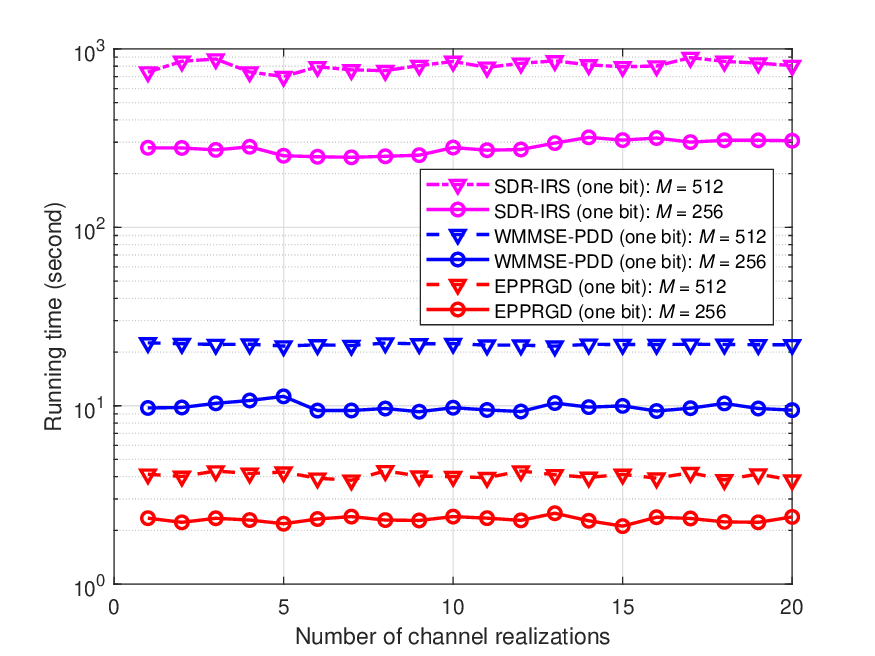}\\
  \caption{Comparison of empirical runtime among architectures on the first twenty randomly generated channels.}\label{Dif_time}
  \end{center}
\end{figure}

Fig. \ref{Dif_time} compares the total empirical runtime required to reach convergence over first twenty independent experimental trials, where each trial corresponds to a complete simulation run with a randomly generated channel realization. The proposed EPPRGD algorithm consistently achieves the fastest convergence, with an average total runtime of 2.30 s per run for $M=256$, which is approximately 2.7× faster than the proposed WMMSE-PDD algorithm (6.24 s). This advantage further increases to nearly 3.9× for $M=512$ (4.07 s versus 15.84 s). In contrast, the SDR-IRS (one-bit) benchmark incurs prohibitively high computational cost, with average runtimes of 282.59 s ($M=256$) and 808.25 s ($M=512$). To decouple convergence speed from algorithmic complexity, we also examine the per-iteration computation time, where EPPRGD requires only 0.0077 s ($M=256$) and 0.014 s ($M=512$) per iteration, slightly outperforming WMMSE-PDD (0.0082 s and 0.018 s) and vastly surpassing SDR-IRS (14.13 s and 40.41 s). This substantial reduction in both total runtime and per-iteration cost highlights the practical efficiency of the proposed algorithms, enabled by closed-form updates and their inherently distributed implementation.

\begin{figure}[t]
  \begin{center}
  \includegraphics[width=2.5in]{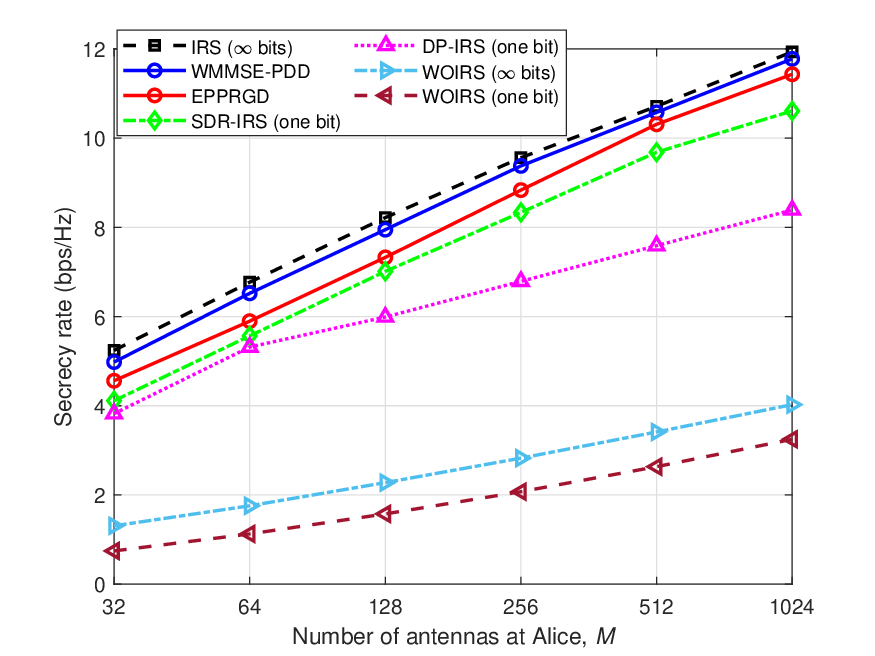}\\
  \caption{Comparison of secrecy rates among architectures as the
number of antennas at Alice ($M$) increases from 32 to 1024.}\label{Dif_M}
  \end{center}
\end{figure}

Fig. \ref{Dif_M} illustrates the secrecy rate performance of various architectures as the number of antennas at Alice, denoted by $M$, varies. As depicted, the secrecy rate consistently increases with a larger $M$ for all considered architectures. This performance gain arises from the additional spatial degrees of freedom introduced by increasing the antenna number, highlighting the advantage of employing mMIMO systems for secure communications. Moreover, both the proposed WMMSE-PDD and EPPRGD algorithms achieve higher secrecy rates compared to the SDR-IRS (one bit), DP-IRS (one bit), WOIRS ($\infty$ bits), and WOIRS (one bit) algorithms. These results indicate that the proposed methods are particularly effective for one-bit precoding design in mMIMO systems with large antenna arrays. Specifically, at $M = 128$, the secrecy rates attained by the WMMSE-PDD and EPPRGD algorithms are approximately 7.95 bps/Hz and 7.33 bps/Hz, respectively, outperforming the SDR-IRS (one bit) algorithm (7.01 bps/Hz), DP-IRS (one bit) algorithm (5.99 bps/Hz), WOIRS ($\infty$ bits) algorithm (2.28 bps/Hz), and WOIRS (one bit) algorithm (1.57 bps/Hz). Furthermore, the proposed algorithms approach the secrecy performance of the upper-bound IRS ($\infty$ bits) algorithm, demonstrating their effectiveness and efficiency.

\begin{figure}[t]
  \begin{center}
  \includegraphics[width=2.5in]{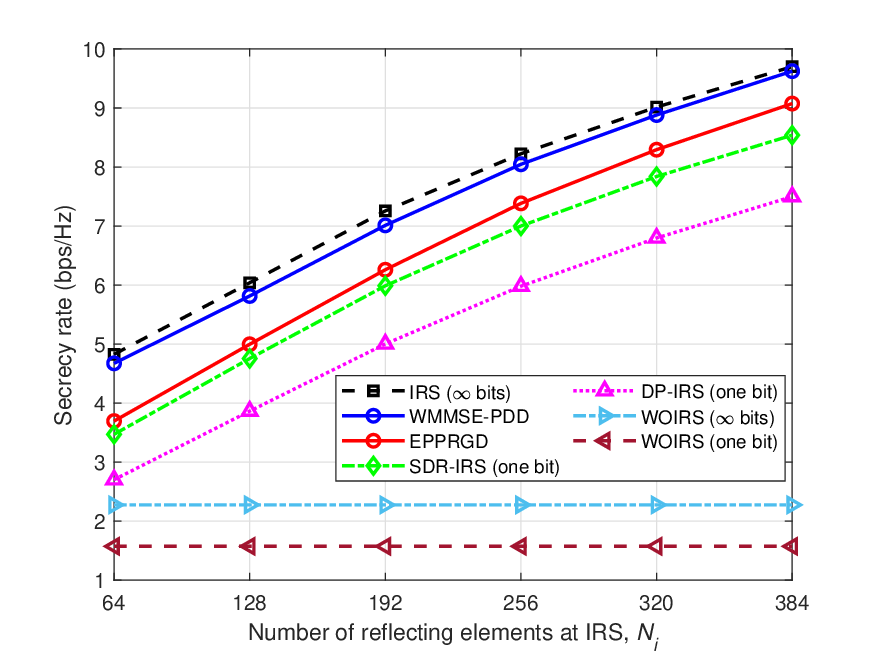}\\
  \caption{Comparison of secrecy rates between architectures as the
number of reflecting elements at the IRS ($N_i$) increases from 64 to 384.}\label{Dif_Irs}
  \end{center}
\end{figure}

Fig. \ref{Dif_Irs} compares the secrecy rate performance of various architectures with different numbers of IRS reflecting elements $N_i$. As observed, the secrecy rate improves with increasing $N_i$ for all architectures employing an IRS. This enhancement demonstrates that the IRS provides additional spatial degrees of freedom, substantially improving overall system performance. Specifically, at $N_i = 256$, the proposed WMMSE-PDD and EPPRGD algorithms achieve secrecy rates of approximately 8.05 bps/Hz and 7.38 bps/Hz, respectively, surpassing those of the SDR-IRS (one bit), DP-IRS (one bit), WOIRS ($\infty$ bits), and WOIRS (one bit) algorithms, which achieve secrecy rates of approximately 7.00 bps/Hz, 5.98 bps/Hz, 2.28 bps/Hz, and 1.57 bps/Hz, respectively. These results clearly demonstrate the effectiveness and hardware efficiency of the proposed algorithms in the design of one-bit DAC-based precoding for IRS-assisted mMIMO systems.

\begin{figure}[t]
  \begin{center}
  \includegraphics[width=2.5in]{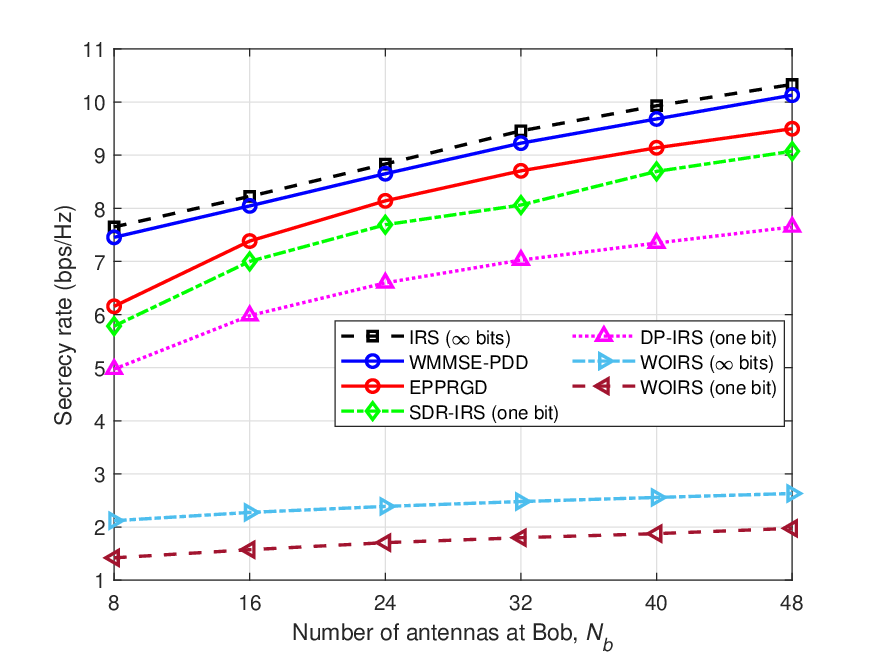}\\
  \caption{Comparison of secrecy rates among architectures as the
number of antennas at Bob ($N_b$) increases from 8 to 48.}\label{Dif_Nb}
  \end{center}
\end{figure}

\begin{figure}[t]
  \begin{center}
  \includegraphics[width=2.5in]{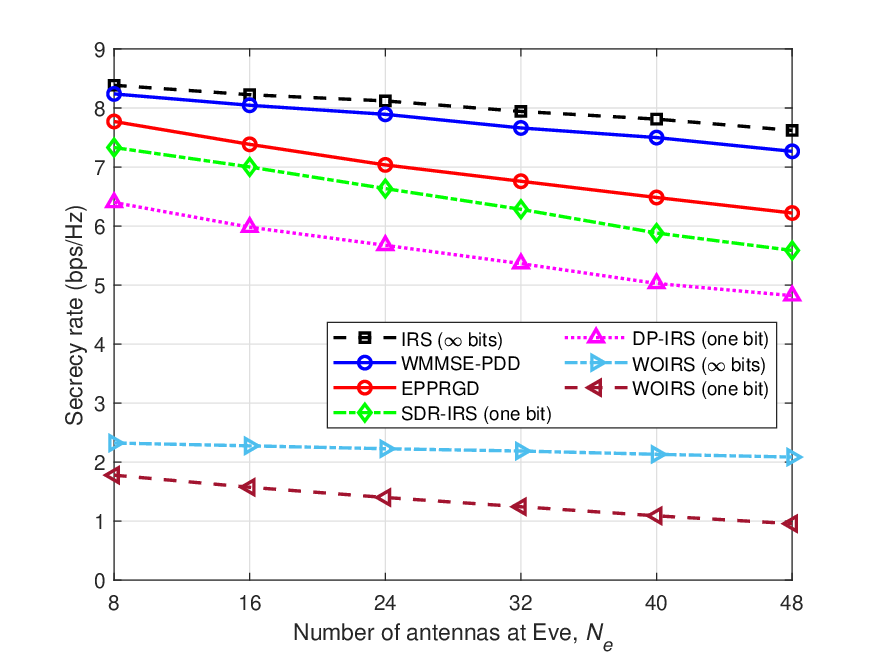}\\
  \caption{Comparison of secrecy rates among architectures as the
number of antennas at Eve ($N_e$) increases from 8 to 48.}\label{Dif_Ne}
  \end{center}
\end{figure}

To evaluate robustness against CSI imperfections, we adopt the channel estimation error model \cite{11314715}. For a given realization of Eve's true channel $\mathbf{H} \in \{\mathbf{H}_{ae}, \mathbf{H}_{ie}\}$, the estimated channel is modeled as $\hat{\mathbf{H}} = \mathbf{H} + \mathbf{E}$, where the entries of $\mathbf{E}$ are i.i.d. variables distributed as $\mathcal{CN}(0, \sigma_E^2)$. We quantify the error severity using the normalized mean square error (NMSE), denoted by $\delta_e$. The variance is determined by the instantaneous channel power as $\sigma_E^2 = \frac{\delta_e \|\mathbf{H}\|_F^2}{K_e}$, where $K_e$ is the total number of elements in $\mathbf{H}$. During simulations, optimization is performed using $\hat{\mathbf{H}}$, while actual secrecy rates are evaluated using the true $\mathbf{H}$. Fig. \ref{Dif_CUN} illustrates the secrecy rate performance against the channel uncertainty level of Eve, $\delta_e$. As observed, the secrecy rate declines with increasing $\delta_e$ for all architectures, attributable to the CSI mismatch that weakens the signal suppression at Eve. Despite this degradation, the proposed algorithms exhibit superior robustness. Specifically, at $\delta_e = 0.5$, the proposed WMMSE-PDD and EP-PRGD algorithms maintain secrecy rates of approximately 6.30 bps/Hz and 5.52 bps/Hz, respectively, significantly outperforming the SDR-IRS (one bit), DP-IRS (one bit), and WOIRS (one bit) benchmarks, which achieve 4.95 bps/Hz, 3.90 bps/Hz, and 0.28 bps/Hz, respectively. Notably, the WMMSE-PDD algorithm approaches the performance of the ideal IRS ($\infty$ bits) upper bound (6.75 bps/Hz), demonstrating the effectiveness of the proposed framework in mitigating CSI imperfections.

\begin{figure}[t]
  \begin{center}
  \includegraphics[width=2.5in]{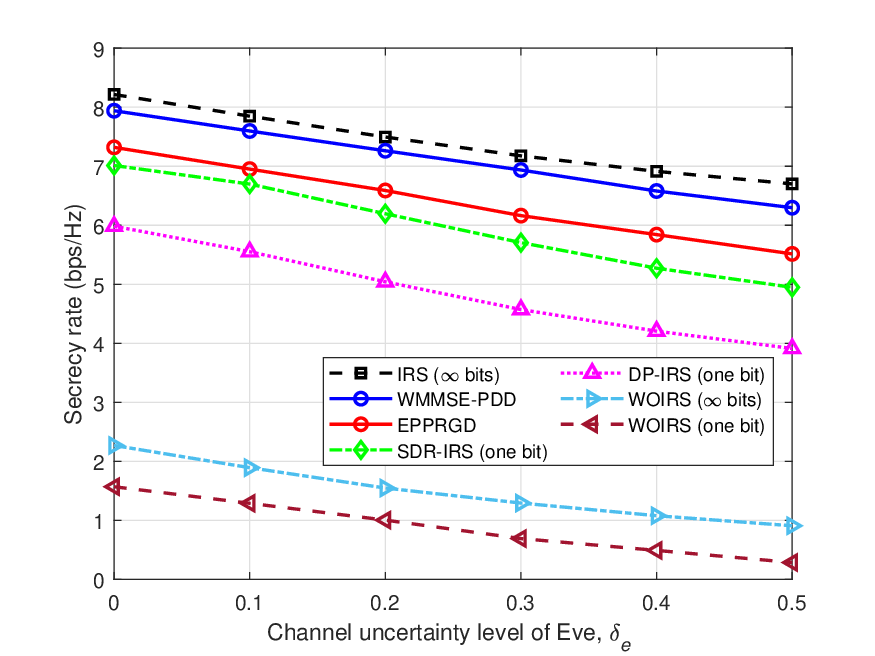}\\
  \caption{Comparison of secrecy rates among architectures as the channel uncertainty level of Eve ($\delta_e$) increases from 0 to 0.5.}\label{Dif_CUN}
  \end{center}
\end{figure}

To demonstrate the extensibility of our design to frontier multi-IRS scenarios, we consider the distributed multi-IRS system in \cite{rafieifar2023secure}, to which our proposed frameworks can be directly applied. Specifically, we consider a deployment with two distributed IRSs located at $(10, 0)$ m and $(50, 0)$ m, respectively. To ensure a fair comparison with the centralized single-IRS benchmark ($N_i=256$), the total number of reflecting elements is kept constant by setting $N_{i_1}=N_{i_2}=128$ for the distributed IRSs, while maintaining all other parameters unchanged. Fig. \ref{dis_mul} depicts the secrecy rate versus the total transmit power $P$. It is observed that the multi-IRS architectures significantly outperform their single-IRS counterparts across the entire power regime. For instance, at $P=45$ dBm, the proposed WMMSE-PDD algorithm in the multi-IRS setup achieves a secrecy rate of approximately 19.5 bps/Hz, providing a substantial gain of about 6 bps/Hz over the single-IRS case (13.5 bps/Hz). This gain stems from the distributed spatial architecture, which increases the probability of favorable line-of-sight links and provides superior degrees of freedom to effectively suppress the signal at Eve.

\begin{figure}[t]
  \begin{center}
  \includegraphics[width=2.5in]{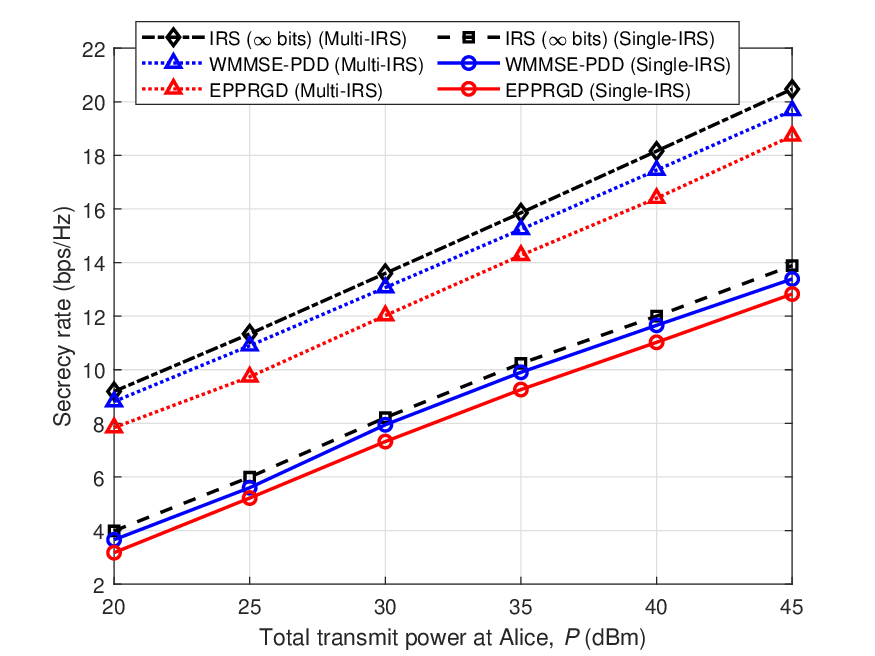}\\
  \caption{Comparison of secrecy rates between single-IRS and distributed multi-IRS architectures as the total transmit power ($P$) increases from 20 to 45 dBm.}\label{dis_mul}
  \end{center}
\end{figure}

\section{Conclusion}
This paper investigated a cost-effective PLS design for IRS-assisted mMIMO systems with one-bit DACs. A joint optimization framework was developed to maximize the secrecy rate by designing one-bit precoding and IRS phase shifts. Two algorithms were proposed: WMMSE-PDD, achieving superior secrecy performance, and EPPRGD, enabling faster convergence by eliminating auxiliary variables. Both algorithms provided analytical updates and were proven to converge to KKT points. Simulations validated the effectiveness of the proposed methods. Future work will extend the framework to scenarios with imperfect CSI and dynamic IRS control.

\begin{appendices} 

\section{Proof of Lemma~\ref{lemma31}}
\label{MPDD1}

Let $a \triangleq \sqrt{1/(2M)}$ and define the discrete one-bit alphabet as $\mathcal{X} = \{ \pm a \pm ja \}$. We establish the equivalence between the discrete set $\mathcal{X}^M$ and the continuous constraints in (\ref{refor1}) through the following analysis of equality conditions.

\subsubsection{Proof of Necessity}  
Assume $x_m \in \mathcal{X}$ for all $m \in \{1, \dots, M\}$. By the definition of $\mathcal{X}$, it holds that $\mathfrak{R}\{x_m\} \in \{ -a, a \}$ and $\mathfrak{I}\{x_m\} \in \{ -a, a \}$, which directly implies the box constraints,
\begin{equation}
-a\le\mathfrak{R}\{x_m\} \le a, \quad -a\le\mathfrak{I}\{x_m\} \le a, \quad \forall m.
\end{equation}
Moreover, the squared magnitude of each element is given by,
\begin{equation}
|x_m|^2 = \mathfrak{R}\{x_m\}^2 + \mathfrak{I}\{x_m\}^2 = a^2 + a^2 = \frac{1}{M}.
\end{equation}
Summing the squared magnitudes over all $M$ antennas, we obtain,
\begin{equation}
\mathbf{x}^H \mathbf{x} = \sum_{m=1}^M |x_m|^2 = \sum_{m=1}^M \frac{1}{M} = 1.
\end{equation}

\subsubsection{Proof of Sufficiency} 
Conversely, assume that the iterate $\mathbf{x}$ satisfies $\mathbf{x}^H \mathbf{x} = 1$ and the element-wise constraints $|\mathfrak{R}\{x_m\}| \le a, |\mathfrak{I}\{x_m\}| \le a$ for all $m$. Under these box constraints, the squared magnitude of any entry $x_m$ is bounded by,
\begin{equation}
|x_m|^2 = \mathfrak{R}\{x_m\}^2 + \mathfrak{I}\{x_m\}^2 \le a^2 + a^2 = \frac{1}{M}.
\end{equation}
Summing these individual bounds across the entire vector yields the following inequality for the total power,
\begin{equation}
\mathbf{x}^H \mathbf{x} = \sum_{m=1}^M |x_m|^2 \le \sum_{m=1}^M \frac{1}{M} = 1.
\end{equation}
According to the premise, the equality $\mathbf{x}^H \mathbf{x} = 1$ must hold. For the sum of $M$ non-negative terms to strictly reach its maximum upper bound, each individual term must attain its respective maximum value,
\begin{equation}
|x_m|^2 = \frac{1}{M}, \quad \forall m \in {1, \dots, M}.
\end{equation}
Given that $|x_m|^2 = \mathfrak{R}\{x_m\}^2 + \mathfrak{I}\{x_m\}^2$ and considering the individual constraints $\mathfrak{R}\{x_m\}^2 \le a^2$ and $\mathfrak{I}\{x_m\}^2 \le a^2$, the identity $|x_m|^2 = 2a^2$ necessitates,
\begin{equation}
\mathfrak{R}\{x_m\}^2 = a^2 \quad \text{and} \quad \mathfrak{I}\{x_m\}^2 = a^2.
\end{equation}
This forces $\mathfrak{R}\{x_m\} \in \{ \pm a \}$ and $\mathfrak{I}\{x_m\} \in \{ \pm a \}$, which implies $x_m \in \mathcal{X}$ for all $m$. This completes the proof. \hfill $\blacksquare$

\section{Proof of Lemma \ref{lemma33}}
\label{Pro33}
Suppose, on the contrary, that another optimal solution $\tilde{\mathbf{x}}\neq\mathbf{x}$ exists. Optimality implies that there is a multiplier $\tilde{\lambda}$ such that $(\tilde{\mathbf{x}},\tilde{\lambda})$
satisfies KKT conditions,
\begin{subequations}\label{kkt}
\begin{align}
&(2\rho\mathbf{A}+(1+2\rho\tilde{\lambda})\mathbf{I}_M)\tilde{\mathbf{x}}
      =\mathbf{t}+\rho\bm{\varphi}+2\rho\mathbf{b}, \label{kkt_grad}\\
&\tilde{\mathbf{x}}^H\tilde{\mathbf{x}}=1,
\end{align}
\end{subequations}
whence $\tilde{\mathbf{x}}=\mathbf{x}(\tilde{\lambda})$ in (\ref{upx}) and
$\mathbf{x}^H(\tilde{\lambda})\mathbf{x}(\tilde{\lambda})=1$. Given $\lambda$ is unique, $\tilde{\lambda}$ must equal $\lambda$, forcing $\tilde{\mathbf{x}}=\mathbf{x}(\lambda)=\mathbf{x}$, contradicting $\tilde{\mathbf{x}}\neq\mathbf{x}$. Therefore, no other distinct minimizer exists; $\mathbf{x}$ is indeed unique. This completes the proof. $\hfill\blacksquare$

\section{Proof of Lemma \ref{lemma34}}
\label{Pro34}
We prove the claim by contradiction. Suppose there exists another feasible solution $\widetilde{\bm\phi}\neq\bm\phi$ achieving the same objective value.  
Since the terms $\Re\{(\theta_n-\rho\psi_n)^*\phi_n\}$ are additive and separable across $n$, and each $\phi_n$ is constrained to lie on the unit circle, equality of the overall objective requires that,
\begin{equation}
\Re\{(\theta_n-\rho\psi_n)^*\widetilde{\phi}_n\} = \Re\{(\theta_n-\rho\psi_n)^*\phi_n\}, \quad \forall n.
\end{equation}
Given that the maximum real part for each $n$ is achieved only when $\widetilde{\phi}_n = \exp(j\arg(\theta_n-\rho\psi_n))$, it follows that $\widetilde{\bm\phi}=\bm\phi$. This contradicts the assumption $\widetilde{\bm\phi}\neq\bm\phi$. Therefore, $\bm\phi$ is the unique global minimizer. This completes the proof. $\hfill\blacksquare$


\section{Convergence of the BCD Algorithm}
\label{appendixD}
To begin with, let ${\bf \mathcal Z}^{k} \triangleq \{w_b^{k}, w_e^{k}, {\bf{v}}^{k}, {\bf{x}}^{k}, {\bm{\theta}}^{k}, {\bf{t}}^{k}, {\bm{\phi}}^{k}\}$ denote the variables at iteration $k$. The convergence relies on three key properties of the BCD updates,

     \textit{1) Objective Function Descent and Lower Bound:} In each BCD step, ${\cal L}_{\rho}(\mathcal{Z})$ is minimized with respect to a single variable block while keeping the others fixed. Since each subproblem is solved to its unique global optimum (as detailed below), we have ${\cal L}_{\rho}(\mathcal{Z}^{k+1}) \leq {\cal L}_{\rho}(\mathcal{Z}^{k})$, indicating that the objective function is monotonically decreasing over iterations. Furthermore, we show that the objective ${\cal L}_{\rho}(\mathcal{Z})$ is bounded below. Specifically, $w_b$ and $w_e$ are strictly positive due to the $-\log(\cdot)$ terms and their update rules in (\ref{upwewb}); $\mathbf{x}$, $\mathbf{t}$, and $\bm{\phi}$ are constrained within compact sets as given in (\ref{pdd2})–(\ref{pdd4}); the penalty terms are non-negative; and $\mathbf{v}$ and $\bm{\theta}$ remain bounded due to the quadratic terms in their subproblems and the overall descent of the objective. Therefore, the sequence $\{{\cal L}_{\rho}(\mathcal{Z}^{k})\}$ is non-increasing and bounded below, and thus convergent.

    \textit{2) Unique Global Optimum for Each Subproblem:} In the proposed iterative framework, every update step guarantees a unique global optimum. Specifically, the auxiliary variables $w_b, w_e$, and $\mathbf{v}$ are obtained as the unique solutions to the convex subproblems in (\ref{upwewb}) and (\ref{upv}), respectively. Similarly, the phase shift $\bm{\theta}$ and auxiliary variable $\mathbf{t}$ correspond to the unique global optima of the convex problems (\ref{uptheta}) and (\ref{updt}), with closed-form solutions given by (\ref{resulttheta}) and (\ref{resultt}). Regarding the primal variables, Lemma~\ref{lemma33} establishes that the update for $\mathbf{x}$ in (\ref{upx}) constitutes the unique global optimum of (\ref{upxnow1}). Likewise, as proven in Lemma~\ref{lemma34}, the update for $\bm{\phi}$ in (\ref{nowphi}) represents the unique global optimum of (\ref{pphi}) (equivalent to (\ref{reforphi})). Crucially, this uniqueness property ensures that the algorithm follows a strict descent path; that is, if the current iterate $\mathcal{Z}^{k}$ is not a KKT point, the augmented Lagrangian strictly decreases, i.e., ${\cal L}_{\rho}(\mathcal{Z}^{k+1}) < {\cal L}_{\rho}(\mathcal{Z}^{k})$.

     \textit{3) Compactness of Iterates:} We show that the sequence $\mathcal{Z}^{k}$ is contained within a compact set. To begin with, the variables $\bf{x}, \bf{t}, \bm{\phi}$ are explicitly constrained to compact sets. Moreover, the boundedness of $w_b, w_e, \bf{v}, \bm{\theta}$ follows from the objective descent property and the coercive nature of their respective subproblem objectives.

Given that ${\cal L}_{\rho}(\mathcal{Z})$ is continuous, each subproblem yields a unique global minimum, and the iterates $\mathcal{Z}^{k}$ lie in a compact set, established convergence results for BCD methods (e.g., Theorem 4.1 in \cite{shi2020penalty}) ensure that any limit point of $\mathcal{Z}^{k}$ is a KKT point of (\ref{Lagrangian}). This completes the proof. $\hfill\blacksquare$

\section{Convergence of the PRGD Algorithm}\label{appendixC}
To establish the proof, we first show that the objective function in (\ref{manpro3}) achieves sufficient decrease at each iteration. This is ensured by carefully choosing the step size in (\ref{stepsize}), i.e.,
\begin{equation}
\begin{aligned}
g_r(\bm{\Upsilon}^k) - g_r(\bm{\Upsilon}^{k+1}) \ge c_{\text{dec}} \| \operatorname{grad}_{\mathcal{M}} g_r(\bm{\Upsilon}^k) \|_2^2, \label{therem4proof}
\end{aligned}
\end{equation}
where $c_{\text{dec}} =\tfrac{1}{2}\alpha^k >0$. We can now complete the proof. The proof is based on a standard telescoping sum argument. The desired inequality for all $k=0,1,\dots,K-1$ is obtained as follows,
\begin{subequations}
    \begin{align}
        g_r(\bm{\Upsilon}^0) - &g_r^{\text{low}} \ge g_r(\bm{\Upsilon}^0) - g_r(\bm{\Upsilon}^K) \\ & = \sum\limits_{k=0}^{K-1} g_r(\bm{\Upsilon}^k) - g_r(\bm{\Upsilon}^{k+1}) \\& \ge K c_{\text{dec}} \min _{k=0,1,\dots,K-1}\| \operatorname{grad}_{\mathcal{M}} g_r(\bm{\Upsilon}^k) \|_2^2
    \end{align}
\end{subequations}
where $g_r^{\text{low}}$ is the lower bound value for the objective function in (\ref{manpro3}). To get the limit statement, observe that $g_r(\bm{\Upsilon}^{k+1}) \le g_r(\bm{\Upsilon}^k)$ for all $k$ by (\ref{therem4proof}). Then, taking $K$ to infinity we see that,
\begin{equation}
    g_r(\bm{\Upsilon}^0) -  g_r^{\text{low}} \ge \sum\limits_{k=0}^{\infty} g_r(\bm{\Upsilon}^k) - g_r(\bm{\Upsilon}^{k+1}),
\end{equation}
where the right-hand side is a series of nonnegative numbers. The bound implies that the summands converge to zero, thus,
\begin{equation}
\begin{aligned}
    0 &= \lim_{k \to \infty} g_r(\bm{\Upsilon}^k) - g_r(\bm{\Upsilon}^{k+1}) \\ &\le c_{\text{dec}} \lim_{k \to \infty}\| \operatorname{grad}_{\mathcal{M}} g_r(\bm{\Upsilon}^k) \|_2^2,
    \end{aligned}
\end{equation}
which confirms that $\|\operatorname{grad}_{\mathcal{M}} g_r(\bm{\Upsilon}^k) \|_2\to0$. Now, let ${\bm{\Upsilon}}^*$ be a limit point of the sequence of iterates. By definition, there exists a subsequence of iterates
${\bm{\Upsilon}}^{(0)},{\bm{\Upsilon}}^{(1)},{\bm{\Upsilon}}^{(2)},\dots$ which converges to ${\bm{\Upsilon}}^*$. Then, since the norm of the gradient of $ g_r(\bm{\Upsilon})$ is a continuous function, it commutes with the limit and we find,
\begin{equation}
\begin{aligned}
    0 &= \lim_{k \to \infty}\| \operatorname{grad}_{\mathcal{M}} g_r(\bm{\Upsilon}^k) \|_2^2 = \lim_{k \to \infty}\| \operatorname{grad}_{\mathcal{M}} g_r(\bm{\Upsilon}^{(k)}) \|_2^2 \\ & = \|\operatorname{grad}_{\mathcal{M}} g_r(\lim_{k \to \infty}( \bm{\Upsilon}^{(k)})) \|_2^2 = \|\operatorname{grad}_{\mathcal{M}} g_r({\bm{\Upsilon}}^*) \|_2^2,
\end{aligned}
\end{equation}
showing that all limit points generated by the sequence are KKT points. This completes the proof. $\hfill\blacksquare$

\section{Convergence of the EPPRGD Algorithm}
\label{appendixF}

Let $\{{\bm \Upsilon}^k\}$ denote the sequence of iterates generated by Algorithm~\ref{alg:2}, where ${\bm \Upsilon} = {\bf x} \oplus {\bm\theta}$ represents the variables on $\mathcal{M} = \mathcal{M}_{\bf x} \times \mathcal{M}_{\bm \theta}$. At the $k$-th outer iteration, the objective $g_r(\cdot)$ in (\ref{manpro3}) is given by,
\begin{equation}
g_r({\bm \Upsilon}^k) = f_r({\bm \Upsilon}^k) + \rho_r^k u^k \sum_{i \in \mathcal{I}} \log( 1 + \exp(\frac{c_i({\bm \Upsilon}^k)}{u^k})),
\end{equation}
where $f_r(\cdot)$ is the original objective function in (\ref{objRe1}), and $\{c_i({\bm\Upsilon})\}_{i \in \mathcal{I}}$ denotes the set of inequality constraints in (\ref{totalett}).

We define the smoothing coefficient $\gamma_i^k$ associated with the $i$-th inequality constraint as,
\begin{equation}\label{eq:gamma_def_appE}
\gamma_i^k \triangleq \frac{\exp(c_i({\bm \Upsilon}^k)/u^k)}{1+\exp(c_i({\bm \Upsilon}^k)/u^k)} \in (0,1).
\end{equation}
Accordingly, the Riemannian gradient of $g_r(\cdot)$ is expressed as,
\begin{equation}\label{eq:eff_mult_structure_appE}
\operatorname{grad}_{\mathcal{M}} g_r({\bm \Upsilon}^k)=\operatorname{grad}_{\mathcal{M}} f_r({\bm \Upsilon}^k)+\sum_{i\in\mathcal{I}}\nu_i^k\operatorname{grad}_{\mathcal{M}} c_i({\bm \Upsilon}^k),\end{equation}
where $\nu_i^k \triangleq \rho_r^k \gamma_i^k$ serves as the effective multiplier.

Since $\mathcal{M}$ is compact, the sequence $\{{\bm \Upsilon}^k\}$ is bounded. Consequently, by the Bolzano--Weierstrass theorem \cite{oman2017short}, there exists a subsequence $\{{\bm \Upsilon}^{k_\ell}\}$ that converges to a limit point ${\bm \Upsilon}^*$. We assume that ${\bm \Upsilon}^*$ is feasible for problem~(\ref{overallproblem2}) and that the linear independence constraint qualification (LICQ) holds at ${\bm \Upsilon}^*$. Furthermore, Step~16 of Algorithm~\ref{alg:2} updates the smoothing parameter as $u^{k+1}=0.2 u^k$. Given the termination condition $u^k \le u_{\min}$ and the limit $u_{\min}\to 0$, the sequence $\{u^k\}$ is monotonically decreasing and approaches zero, which is essential for establishing complementary slackness. We now begin our proof.

{\it{1) Convergence of Multipliers:}} From \eqref{eq:eff_mult_structure_appE}, the effective multiplier to $c_i({\bm \Upsilon})$ is,
\begin{equation}\label{eq:multiplier_def}
\nu_i^k \triangleq \rho_r^k \gamma_i^k,
\end{equation}
where $\gamma_i^k \in (0,1)$ implies $0\le \nu_i^k \le \rho_r^k$.

According to Algorithm~\ref{alg:2}, the penalty parameter $\rho_r$ is increased only when the constraint violation $\text{error}_r$ exceeds a prescribed threshold. Since $\text{error}_r$ is a continuous function of the constraint residuals $\{c_i(\cdot)\}$ and ${\bm \Upsilon}^{k_\ell}\to{\bm \Upsilon}^*$ with $c_i({\bm \Upsilon}^*)\le 0$, we have $\text{error}_r({\bm \Upsilon}^{k_\ell})\to 0$. Hence, the penalty update can be triggered only finitely many times along $\{k_\ell\}$, which implies that $\{\rho_r^{k_\ell}\}$ is bounded. Therefore, $\{\rho_r^{k_\ell}\}$ admits a convergent subsequence; for notational simplicity, we relabel it still as $\{\rho_r^{k_\ell}\}$ and denote its limit by $\rho_r^*<\infty$.

Consequently, $\{\nu_i^{k_\ell}\}$ is bounded and thus admits a convergent subsequence. Without loss of generality, we assume,
\begin{equation}
\nu_i^{k_\ell} \to \nu_i^*, \quad \text{as } \ell \to \infty,
\end{equation}
where $\nu_i^* \ge 0$ holds by construction.

\textit{2) Complementary Slackness:} We verify $\nu_i^*\, c_i({\bm \Upsilon}^*) = 0$. Partition $\mathcal{I}$ into the active set $\mathcal{A}_{\text{active}}=\{i\mid c_i({\bm \Upsilon}^*)=0\}$ and the inactive set $\mathcal{I}_{\text{inactive}}=\{i\mid c_i({\bm \Upsilon}^*)<0\}$.

For $i \in \mathcal{A}_{\text{active}}$, $\nu_i^*\, c_i({\bm \Upsilon}^*)=0$ holds trivially.

For $i \in \mathcal{I}_{\text{inactive}}$, strict feasibility holds: $c_i({\bm \Upsilon}^*)<0$. Since each $c_i(\cdot)$ is continuous and ${\bm \Upsilon}^{k_\ell} \to {\bm \Upsilon}^*$, there exist $\delta>0$ and $\ell_0$ such that for all $\ell>\ell_0$,
\begin{equation}
c_i({\bm \Upsilon}^{k_\ell}) \le -\delta < 0.
\end{equation}
As $u^{k_\ell} \to 0$, the argument of the exponential term diverges to negative infinity,
\begin{equation}
\frac{c_i({\bm \Upsilon}^{k_\ell})}{u^{k_\ell}}
\le \frac{-\delta}{u^{k_\ell}}
\xrightarrow{\ell \to \infty} -\infty.
\end{equation}
Thus, $\gamma_i^{k_\ell}\to 0$ according to (\ref{eq:gamma_def_appE}). Since $\{\rho_r^{k_\ell}\}$ is bounded, we obtain $\nu_i^{k_\ell}=\rho_r^{k_\ell}\gamma_i^{k_\ell}\to 0$, which implies $\nu_i^*=0$. Therefore, $\nu_i^*\, c_i({\bm \Upsilon}^*)=0$ holds for all $i\in\mathcal{I}_{\text{inactive}}$, proving complementary slackness.

\textit{3) Stationarity:} Finally, we establish stationarity. Define the KKT residual vector ${\bm v} \in T_{{\bm \Upsilon}^*}\mathcal{M}$ at the limit point as,
\begin{equation}
    {\bm v} \triangleq \operatorname{grad}_{\mathcal{M}} f_r({\bm \Upsilon}^*) 
    + \sum_{i \in \mathcal{I}} \nu_i^* \operatorname{grad}_{\mathcal{M}} c_i({\bm \Upsilon}^*).
\end{equation}
To prove $\|{\bm v}\|_F = 0$, we introduce parallel transport to map tangent vectors from $T_{{\bm \Upsilon}^{k_\ell}}\mathcal{M}$ to $T_{{\bm \Upsilon}^*}\mathcal{M}$.
Since ${\bm \Upsilon}^{k_\ell} \to {\bm \Upsilon}^*$, for sufficiently large $\ell$, the iterate falls within the injectivity radius of the limit point \cite{absil2008optimization},
\begin{equation}
    \operatorname{dist}({\bm \Upsilon}^{k_\ell}, {\bm \Upsilon}^*) < \operatorname{inj}({\bm \Upsilon}^*).
\end{equation}
This condition guarantees the existence of a unique minimizing geodesic connecting ${\bm \Upsilon}^{k_\ell}$ and ${\bm \Upsilon}^*$. Let $\mathcal{P}_{\ell \to *}$ denote the parallel transport operator along this unique geodesic \cite{absil2008optimization}. By the triangle inequality,
\begin{equation} \label{eq:v_triangle_appE}
    \begin{aligned}
    \|{\bm v}\|_F \le & \underbrace{\| {\bm v} - \mathcal{P}_{\ell \to *}( \operatorname{grad}_{\mathcal{M}} g_r({\bm \Upsilon}^{k_\ell}) ) \|_F}_{\text{Term A}} \\
    & + \underbrace{\| \mathcal{P}_{\ell \to *} ( \operatorname{grad}_{\mathcal{M}} g_r({\bm \Upsilon}^{k_\ell}) ) \|_F}_{\text{Term B}}.
    \end{aligned}
\end{equation}
We analyze the limiting behavior as $\ell \to \infty$.

\subsubsection*{Term B}
Since the parallel transport operator $\mathcal{P}_{\ell \to *}$ is an isometry \cite{absil2008optimization}, it preserves the Frobenius norm, and thus,
\begin{equation}
    \text{Term B} 
    = \| \mathcal{P}_{\ell \to *} ( \operatorname{grad}_{\mathcal{M}} g_r({\bm \Upsilon}^{k_\ell}) ) \|_F 
    =\|\operatorname{grad}_{\mathcal{M}} g_r({\bm \Upsilon}^{k_\ell})\|_F .
\end{equation}
By the inner-loop criterion in Theorem~\ref{theorem3}, we have,
\begin{equation}
    \lim_{\ell\to\infty}\|\operatorname{grad}_{\mathcal{M}} g_r({\bm \Upsilon}^{k_\ell})\|_F = 0 
     \Longrightarrow 
    \lim_{\ell\to\infty}\text{Term B} = 0.
\end{equation}

\subsubsection*{Term A}
Expanding Term A via the linearity of parallel transport, we obtain,
\begin{equation}
    \begin{aligned}
    \text{Term A} 
    =\| (\operatorname{grad}_{\mathcal{M}} f_r({\bm \Upsilon}^*) + \sum_{i \in \mathcal{I}} \nu_i^* \operatorname{grad}_{\mathcal{M}} c_i({\bm \Upsilon}^*)) \\
    \quad - \mathcal{P}_{\ell \to *} (\operatorname{grad}_{\mathcal{M}} f_r({\bm \Upsilon}^{k_\ell}) + \sum_{i \in \mathcal{I}} \nu_i^{k_\ell} \operatorname{grad}_{\mathcal{M}} c_i({\bm \Upsilon}^{k_\ell})) \|_F .
    \end{aligned}
\end{equation}
Using the triangle inequality yields,
\begin{equation} \label{eq:termA_split_appE}
    \begin{aligned}
    &\text{Term A} \le  \underbrace{\| \operatorname{grad}_{\mathcal{M}} f_r({\bm \Upsilon}^*) - \mathcal{P}_{\ell \to *} (\operatorname{grad}_{\mathcal{M}} f_r({\bm \Upsilon}^{k_\ell})) \|_F}_{\Delta_f} \\
    & + \sum_{i \in \mathcal{I}} \underbrace{\| \nu_i^* \operatorname{grad}_{\mathcal{M}} c_i({\bm \Upsilon}^*) - \nu_i^{k_\ell} \mathcal{P}_{\ell \to *} (\operatorname{grad}_{\mathcal{M}} c_i({\bm \Upsilon}^{k_\ell}))\|_F}_{\Delta_{c,i}} .
    \end{aligned}
\end{equation}
Since $f_r(\cdot)$ is continuously differentiable, $\operatorname{grad}_{\mathcal{M}} f_r(\cdot)$ is a continuous vector field. By the continuity property of vector fields under parallel transport along minimizing geodesics \cite[Lemma A.2]{liu2020simple}, we have,
\begin{equation}
    \lim_{\ell \to \infty} \Delta_f = 0.
\end{equation}
For $\Delta_{c,i}$, adding and subtracting $\nu_i^* \mathcal{P}_{\ell \to *} (\operatorname{grad}_{\mathcal{M}} c_i({\bm \Upsilon}^{k_\ell}))$ yields,
\begin{equation}
    \begin{aligned}
    \Delta_{c,i} 
    &\le |\nu_i^*| \cdot \| \operatorname{grad}_{\mathcal{M}} c_i({\bm \Upsilon}^*) - \mathcal{P}_{\ell \to *} \operatorname{grad}_{\mathcal{M}} c_i({\bm \Upsilon}^{k_\ell}) \|_F \\
    &\quad + |\nu_i^* - \nu_i^{k_\ell}| \cdot \|\operatorname{grad}_{\mathcal{M}} c_i({\bm \Upsilon}^{k_\ell})\|_F .
    \end{aligned}
\end{equation}
The first term tends to $0$ by the continuity of $\operatorname{grad}_{\mathcal{M}} c_i(\cdot)$ and \cite[Lemma A.2]{liu2020simple}. The second term tends to $0$ since $|\nu_i^* - \nu_i^{k_\ell}| \to 0$. Therefore, $\lim_{\ell \to \infty} \Delta_{c,i} = 0$ for each $i \in \mathcal{I}$. Substituting these limits into \eqref{eq:v_triangle_appE} yields $\|{\bm v}\|_F = 0$, confirming stationarity.

Since ${\bm \Upsilon}^*$ satisfies primal feasibility, dual feasibility, complementary slackness, and stationarity, ${\bm \Upsilon}^*$ is a KKT point of problem~(\ref{overallproblem2}). \hfill $\blacksquare$

\end{appendices} 

\bibliographystyle{IEEEtran}
\bibliography{Bibliography}

\vfill

\end{document}